\documentclass{llncs}
\usepackage[english]{babel}
\usepackage[utf8]{inputenc}
\usepackage{amsmath,amssymb}
\usepackage{xspace}
\usepackage[inline]{enumitem}
\usepackage{multirow}
\usepackage[labelformat=simple]{subcaption}
\usepackage{csvsimple}
\usepackage{ifthen,xifthen}
\usepackage{wrapfig}
\usepackage{tabularx}
\usepackage{caption}\captionsetup{labelfont=bf,aboveskip=4pt}
%
\usepackage{tikz}\usetikzlibrary{positioning,shapes,patterns,fit,external}
\tikzset{%
state/.style={draw,circle,thick},%
final/.style={double},%
transition/.style={->,>=stealth,thick},%
inmeta/.style={state,dashed},%
outmeta/.style={state,rectangle,rounded corners},%
hist/.style={rectangle,rounded corners,inner sep=0cm,fill=black!10},%
s/.style={state},%
t/.style={transition}%
}
\tikzexternalize[prefix=tikzexternalize/]
%
\usepackage{pgfplots,pgfplotstable} \pgfplotsset{compat=1.9,scaled ticks=false,mylegend/.style={legend columns=1,legend cell align=left},myscale/.style={scaled ticks=true,x tick scale label style={at={(xticklabel cs:1.12,-8mm)}}}} \pgfplotstableset{col sep=comma} \usetikzlibrary{pgfplots.statistics}
\usepackage{hyperref}
\usepackage[all]{hypcap} 
\hypersetup{%
colorlinks=true,
linkcolor=blue,
citecolor=red,
urlcolor=blue
}
%
\newenvironment{enumin}[3][.]{\begin{enumerate*}[label=\arabic*),itemjoin={{#2 }},itemjoin*={{#2 #3 }},after={#1}]}{\end{enumerate*}}
\setlist{itemsep=2pt}
\setlist[2]{topsep=2pt}
\setlength{\extrarowheight}{2.5pt}
%
\newif\ifextendedversion
\newif\iftacasversion
%
\extendedversiontrue
%
\newcommand{%
	\ifextendedversion{%
		\includegraphics{tikzexternalize/.pdf}
	}\else{%
		\input{tikz/}
	}\fi
}[1]{%
	\ifextendedversion{%
		\includegraphics{tikzexternalize/#1.pdf}
	}\else{%
		\input{tikz/#1}
	}\fi
}
%
%
\newcommand{\vpa}{\textsc{Vpa}\xspace} 
\newcommand{\ba}{\textsc{Ba}\xspace} 
\newcommand{\fa}{\textsc{Fa}\xspace} 
\newcommand{\sevpa}{\textsc{Sevpa}\xspace} 
\newcommand{\vpl}{\textsc{Vpl}\xspace} 
\newcommand{\sat}{\textsc{Sat}\xspace} 
\newcommand{\pmaxsat}{\textsc{PM}ax-\textsc{Sat}\xspace} 
\newcommand{\np}{\textsc{Np}\xspace} 
\newcommand{\resp}{resp.\xspace} 
\newcommand{\Qrel}{\textsc{Raq} relation\xspace} 
\newcommand{\ultimate}{\textsc{Ultimate}\xspace} 
\newcommand{\automizer}{\textsc{Automizer}\xspace} 
\newcommand{\cegar}{CEGAR\xspace} 
\newcommand{\svcomp}{SV-COMP\xspace} 
%
\newcommand{\AlphaAll}{\ensuremath{\Sigma}\xspace} 
\newcommand{\AlphaI}{\ensuremath{\Sigma_\text{i}}\xspace} 
\newcommand{\AlphaC}{\ensuremath{\Sigma_\text{c}}\xspace} 
\newcommand{\AlphaR}{\ensuremath{\Sigma_\text{r}}\xspace} 
\newcommand{\AlphaS}{\ensuremath{Q_{\bot}}\xspace} 
\newcommand{\transAll}{\ensuremath{\Delta}\xspace} 
\newcommand{\transI}{\ensuremath{\Delta_\text{i}}\xspace} 
\newcommand{\transC}{\ensuremath{\Delta_\text{c}}\xspace} 
\newcommand{\transR}{\ensuremath{\Delta_\text{r}}\xspace} 
\newcommand{\eps}{\ensuremath{\varepsilon}\xspace} 
\newcommand{\wellmatched}{\ensuremath{\mathit{WM}(\AlphaAll)}\xspace}
\newcommand{\matchedreturn}{\ensuremath{\mathit{MR}(\AlphaAll)}\xspace}
\newcommand{\run}[1][]{\ensuremath{\rho}\xspace} 
\newcommand{\stacks}{\ensuremath{\mathit{St}}\xspace} 
\newcommand{\stackRAW}{\sigma} 
\newcommand{\stack}{\ensuremath{\stackRAW}\xspace} 
\newcommand{\A}{\ensuremath{\mathcal{A}}\xspace} 
\newcommand{\B}{\ensuremath{\mathcal{B}}\xspace} 
\newcommand{\lang}{\ensuremath{L}\xspace} 
\newcommand{\Top}{\ensuremath{\mathit{top}}\xspace} 
\newcommand{\tops}{\ensuremath{\mathit{tops}}\xspace} 
\newcommand{\hier}[1]{\ensuremath{\hat{#1}}\xspace} 
\newcommand{\symb}{\ensuremath{x}\xspace} 
%
\newcommand{\true}{\ensuremath{\textsf{true}}\xspace} 
\newcommand{\false}{\ensuremath{\textsf{false}}\xspace} 
\newcommand{\XRAW}{\mathsf{X}} 
\newcommand{\X}[1]{\ensuremath{\XRAW_{\{#1\}}}\xspace} 
\newcommand{\Xa}[1]{\ensuremath{\XRAW_{#1}}\xspace} 
\newcommand{\cnf}{\ensuremath{\Phi}\xspace} 
%
\newcommand{\etal}{\textit{et al.}\xspace}
\newcommand{\defeq}{\ensuremath{\mathop{\stackrel{\text{def}}{=}}}\xspace} 
\newcommand{\eqrel}{\ensuremath{\equiv}\xspace} 
\newcommand{\concat}{\ensuremath{\cdot}} 
\newcommand{\quantDot}{\ensuremath{.\,}} 
\newcommand{\initialtransition}[2][5mm]{\draw[t,<-] (#2) -- +(-#1,#1);} 
\newcommand{\On}[1]{\ensuremath{\mathcal{O}(#1)}\xspace} 
\newcommand{\colhead}[1]{\multicolumn{1}{c|}{#1}}
\newcommand{\colheadDouble}[1]{\multicolumn{1}{c||}{#1}}
\newcommand{\myitem}{\item[$\bullet$]}
\newcommand{\mysubsubsection}[1]{\vspace*{-\baselineskip}\subsubsection{#1}}
\makeatletter
\newcommand{\blfootnote}{\xdef\@thefnmark{}\@footnotetext}
\makeatother
\colorlet{darkgreen}{green!80!black}
\newcommand{\plot}[5]{%
\begin{tikzpicture}
	\begin{axis}[mylegend,xlabel={#1},width=#3,height=#3,ylabel={#2},#4]
		#5
	\end{axis}
\end{tikzpicture}
}
\newcommand{\scatterPlot}[9][]{%
	\plot{#2}{#3}{45mm}{myscale,domain=1:#7,xmin=1,xmax=#7,xtick=#8,ymin=1,ymax=#7,ytick=#8,#1}{
		\addplot[mark=none,color=red,thick] {x};
		\addplot[mark=none,color=red,thick,dashed] {0.5*x};
		\addplot[only marks,mark=o,mark size=.7,color=blue] table [col sep=comma,x=#5,y=#6] {data/#4.csv};
		\ifthenelse{\isempty{#9}}{}{
			\addplot[only marks,mark=asterisk,mark size=1.1,color=darkgreen] table [col sep=comma,x=#5,y=#6] {data/#9.csv};
		}
	}
}
\newcommand{\dotPlotWithMeanHardCodedStates}[8][]{%
	\plot{#2}{#3}{45mm}{myscale,#1}{%
		\pgfmathsetmacro{\MeanA}{38.14} \DrawHMean[blue,thick]{A}
		\ifthenelse{\isempty{#8}}{}{
			\pgfmathsetmacro{\MeanB}{36.54} \DrawHMean[darkgreen,thick]{B}
		}
		#7
		\addplot+[only marks,mark=o,mark size=.7,color=blue] table [col sep=comma,x=#5,y=#6] {data/#4.csv};
		\ifthenelse{\isempty{#8}}{}{
			\addplot+[only marks,mark=asterisk,mark size=1.1,color=darkgreen] table [col sep=comma,x=#5,y=#6] {data/#8.csv};
		}
	}
}
\newcommand{\dotPlotWithMeanHardCodedTransitions}[8][]{%
	\plot{#2}{#3}{45mm}{myscale,#1}{%
		\pgfmathsetmacro{\MeanA}{36.53} \DrawHMean[blue,thick]{A}
		\ifthenelse{\isempty{#8}}{}{
			\pgfmathsetmacro{\MeanB}{35.89} \DrawHMean[darkgreen,thick]{B}
		}
		#7
		\addplot+[only marks,mark=o,mark size=.7,color=blue] table [col sep=comma,x=#5,y=#6] {data/#4.csv};
		\ifthenelse{\isempty{#8}}{}{
			\addplot+[only marks,mark=asterisk,mark size=1.1,color=darkgreen] table [col sep=comma,x=#5,y=#6] {data/#8.csv};
		}
	}
}
\newcommand{\dotPlot}[8][]{%
	\plot{#2}{#3}{45mm}{myscale,#1}{%
		\GetMeanA{data/#4Aggregated.csv}{#6} \DrawHMean[blue,thick]{A}
		\ifthenelse{\isempty{#8}}{}{
			\GetMeanB{data/#8Aggregated.csv}{#6} \DrawHMean[darkgreen,thick]{B}
		}
		#7
		\addplot+[only marks,mark=o,mark size=.7,color=blue] table [col sep=comma,x=#5,y=#6] {data/#4.csv};
		\ifthenelse{\isempty{#8}}{}{
			\addplot+[only marks,mark=asterisk,mark size=1.1,color=darkgreen] table [col sep=comma,x=#5,y=#6] {data/#8.csv};
		}
	}
}
\newcommand\GetMeanA[2]{
	\pgfplotstableread{#1}\tableA
	\pgfplotstableset{
		create on use/new/.style={
		create col/expr={\pgfmathaccuma + \thisrow{#2}}},
	}
	\pgfplotstablegetrowsof{\tableA}
	\pgfmathsetmacro{\NumRows}{\pgfplotsretval}
	\pgfplotstablegetelem{\numexpr\NumRows-1\relax}{new}\of{#1} 
	\pgfmathsetmacro{\Sum}{\pgfplotsretval}
	\pgfmathsetmacro{\MeanA}{\Sum/\NumRows}
}
\newcommand\GetMeanB[2]{
	\pgfplotstableread{#1}\tableA
	\pgfplotstableset{
		create on use/new/.style={
		create col/expr={\pgfmathaccuma + \thisrow{#2}}},
	}
	\pgfplotstablegetrowsof{\tableA}
	\pgfmathsetmacro{\NumRows}{\pgfplotsretval}
	\pgfplotstablegetelem{\numexpr\NumRows-1\relax}{new}\of{#1} 
	\pgfmathsetmacro{\Sum}{\pgfplotsretval}
	\pgfmathsetmacro{\MeanB}{\Sum/\NumRows}
}
\newcommand\DrawVMean[2][]{
\draw[#1] 
  (axis cs:\csname Mean#2\endcsname,\pgfkeysvalueof{/pgfplots/ymin}) -- 
  (axis cs:\csname Mean#2\endcsname,\pgfkeysvalueof{/pgfplots/ymax});
}
\newcommand\DrawHMean[2][]{
\draw[#1] 
  (axis cs:\pgfkeysvalueof{/pgfplots/xmin},\csname Mean#2\endcsname) -- 
  (axis cs:\pgfkeysvalueof{/pgfplots/xmax},\csname Mean#2\endcsname);
}
\newcommand{\doublelineplot}[8][]{%
	\plot{#2}{#3}{45mm}{myscale,#1}{%
		\addplot[no marks,thick,color=blue] table [col sep=comma,x=INTERNAL_DENSITY,y=#4] {data/#8.csv};
		#6
		\addplot[no marks,thick,color=red] table [col sep=comma,x=INTERNAL_DENSITY,y=#5] {data/#8.csv};
		#7
	}
}
\newcommand{\plottable}[6]{%
\csvreader[%
	separator=comma,tabular=#2,
	table head=\hline #3 \\ \hline,
	table foot=\hline,
	late after line=\\,
	#6
	]{data/#1.csv}%
	{#5}%
	{#4}%
}

\newcommand{\plottableOffline}[4]{%
	\plottable{#1}{#2}{#3}{#4}{%
		Aggregation=\KEYaggregation,
		Count=\KEYcount,
		STATES_INPUT=\KEYstatesIn,
		STATES_OUTPUT=\KEYstatesOut,
		STATES_REDUCTION_ABSOLUTE=\KEYstatesRemoved,
		STATES_REDUCTION_RELATIVE=\KEYstatesRemovedRel,
		SIZE_MAXIMAL_INITIAL_EQUIVALENCE_CLASS=\KEYmaxInitPartBlock,
		ALPHABET_SIZE_INTERNAL=\KEYsymbolsInternal,
		ALPHABET_SIZE_CALL=\KEYsymbolsCall,
		ALPHABET_SIZE_RETURN=\KEYsymbolsReturn,
		TRANSITIONS_INTERNAL_INPUT=\KEYtransitionsInternalIn,
		TRANSITIONS_CALL_INPUT=\KEYtransitionsCallIn,
		TRANSITIONS_RETURN_INPUT=\KEYtransitionsReturnIn,
		TRANSITIONS_INTERNAL_OUTPUT=\KEYtransitionsInternalOut,
		TRANSITIONS_CALL_OUTPUT=\KEYtransitionsCallOut,
		TRANSITIONS_RETURN_OUTPUT=\KEYtransitionsReturnOut,
		BUCHI_NONDETERMINISTIC_STATES=\KEYnondeterminism,
		RUNTIME_TOTAL=\KEYruntimeTotal,
		NUMBER_OF_VARIABLES=\KEYvariables,
		NUMBER_OF_CLAUSES=\KEYclauses
	}
}
%
%
%
%
\title{Minimization~of~Visibly~Pushdown~Automata Using~Partial~Max-SAT}
\author{Matthias Heizmann \and Christian Schilling \and Daniel Tischner}
\institute{University of Freiburg, Germany}
%

\begin{document}

\maketitle

\ifextendedversion{%
\blfootnote{This paper is an extended version of the paper with the same title presented at TACAS 2017~\cite{TACAS2017}.
The final publication is available at Springer via \url{http://dx.doi.org/10.1007/978-3-662-54577-5_27}.}%
}\fi

\begin{abstract}

We consider the problem of state-space reduction for nondeterministic weakly-hierarchical visibly pushdown automata (\vpa).
\vpa recognize a robust and algorithmically tractable fragment of context-free languages that is natural for modeling programs.

We define an equivalence relation that is sufficient for language-preserving quotienting of \vpa.
Our definition allows to merge states that have different behavior, as long as they show the same behavior for reachable equivalent stacks.
We encode the existence of such a relation as a Boolean partial maximum satisfiability (\pmaxsat) problem and present an algorithm that quickly finds satisfying assignments.
These assignments are sub-optimal solutions to the \pmaxsat problem but can still lead to a significant reduction of states.

We integrated our method in the automata-based software verifier \ultimate \automizer and show performance improvements on benchmarks from the software verification competition \svcomp.

\end{abstract}





\section{Introduction}\label{sec:introduction}

The class of visibly pushdown languages (\vpl)~\cite{AlurMadhusudan:2004} lies properly between the regular and the context-free languages.
\vpl enjoy most desirable properties of regular languages (closure under Boolean operations and decision procedures for, e.g., the equivalence problem).
They are well-suited for representing data that have both a linear and a hierarchical ordering, e.g., procedural programs~\cite{ThakurLLBDEAR10,HeizmannHP10,HarrisJR12,AlurBE17} and XML documents~\cite{Pitcher05,KumarMV07,ThomoVY08,MozafariZZ12}. 

The corresponding automaton model is called \emph{visibly pushdown automaton} (\vpa).
It extends the finite automaton model with a stack of restricted access by requiring that the input symbol specifies the stack action -- a call (\resp return) symbol implies a push (\resp pop) operation, and an internal symbol ignores the stack.
In this paper, we consider a notion of \vpa where a call always pushes the current state on the stack. These \vpa are called weakly-hierarchical \vpa~\cite{AlurMadhusudan:2009}.

\smallskip

Size reduction of automata is an active research topic~\cite{Clemente11,MayrClemente:2013,BaarirDuretlutz:2014,DAntoniV14,BaarirD15,AbelR15,AlmeidaEtal:2016} that is theoretically appealing and has practical relevance: smaller automata require less memory and speed up automata-based tools~\cite{HolzmannP99,KlarlundMS02,HabermehlHRSV12,HeizmannDGLMSP16}. 
In this paper, we present a size reduction technique for a general class of (nondeterministic) \vpa that is different from classes that were considered in previous approaches~\cite{AlurEtal:2005,KumarEtal:2006,ChervetWalukiewicz:2007}.
\iftacasversion{An extended version of this paper is available~\cite{extendedVersionArxiv}.}\fi

\smallskip

It is well-known that for deterministic finite automata the unique minimal automaton can be obtained by quotienting (i.e., merging equivalent states), and there exists an efficient algorithm for this purpose~\cite{Hopcroft:1971}.
\vpa do not have a canonical minimum~\cite{AlurEtal:2005}.
For other automaton classes that lack this property, the usual approach is to find equivalence relations that are sufficient for quotienting~\cite{EtessamiEtal:2005,AbdullaEtal:2009,AlmeidaEtal:2016}.
The main difficulty of a quotienting approach for \vpa is that two states may behave similarly given one stack but differently given another stack, and as the number of stacks is usually infinite, one cannot simply compare the behaviors for each of them.

\subsection{Motivating examples}

We now present three observations.
The first observation is our key insight and shows that \vpa have interesting properties that we can exploit.
The other observations show that \vpa have intricate properties that make quotienting nontrivial.
For convenience, we use $a$ for internal, $c$ for call, and $r$ for return symbols, and we omit transitions to the sink state.

\subsubsection{Exploiting unreachable stacks allows merging states}

\begin{figure}[t]
	\centering
	\begin{subfigure}[b]{0.35\textwidth}
		\centering
	\ifextendedversion{%
		\includegraphics{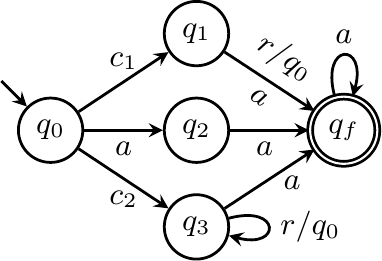}
	}\else{%
		\tikzsetnextfilename{auto_two_out_of_three_original}
\begin{tikzpicture}
	\node[state] (q0) {$q_0$};
	\node[state] (q2) [right=8mm of q0] {$q_2$};
	\node[state] (q1) [above=3mm of q2] {$q_1$};
	\node[state] (q3) [below=3mm of q2] {$q_3$};
	\node[state,final] (qf) [right=8mm of q2] {$q_f$};
	%
	\initialtransition[5mm]{q0}
	\draw[t] (q0) to node[above] {$c_1$} (q1);
	\draw[t] (q0) to node[below] {$a$} (q2);
	\draw[t] (q0) to node[below] {$c_2$} (q3);
	\draw[t,sloped] (q1) to node[above] {$r/q_0$} (qf);
	\draw[t,sloped,draw=none] (q1) to node[below] {$a$} (qf);
	\draw[t] (q2) to node[below,xshift=-.5mm] {$a$} (qf);
	\draw[t] (q3) to node[below,near end] {$a$} (qf);
	\draw[t,loop right] (q3) to node[right] {$r/q_0$} (q3);
	\draw[t,loop above] (qf) to node[above] {$a$} (qf);
\end{tikzpicture}
	}\fi

		\caption{A \vpa.}
		\label{fig:auto_two_out_of_three_1}
	\end{subfigure}
	\ifextendedversion{\\[-13mm]}\else{\\}\fi
	\begin{subfigure}[b]{0.4\textwidth}
		\centering
	\ifextendedversion{%
		\includegraphics{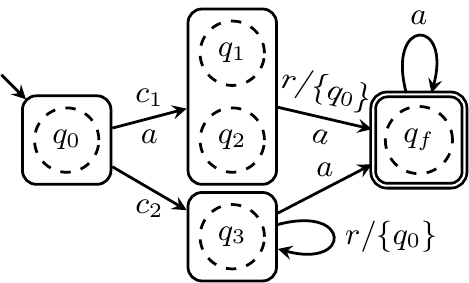}
	}\else{%
		\tikzsetnextfilename{auto_two_out_of_three_merged_1}
\begin{tikzpicture}
	\node[inmeta] (q0) {$q_0$};
	\node[outmeta,fit=(q0)] (meta0) {};
	\node[inmeta] (q2) [right=10mm of q0] {$q_2$};
	\node[inmeta] (q1) [above=2mm of q2] {$q_1$};
	\node[outmeta,fit=(q1)(q2)] (meta12) {};
	\node[inmeta] (q3) [below=3mm of q2] {$q_3$};
	\node[outmeta,fit=(q3)] (meta3) {};
	\node[inmeta] (qf) [right=12mm of q2] {$q_f$};
	\node[outmeta,fit=(qf),final] (metaf) {};
	\initialtransition[2mm]{meta0.north west) +(.5mm,-.5mm}
	\draw[t] (meta0) to node[above] {$c_1$} (meta12);
	\draw[t,draw=none] (meta0) to node[below] {$a$} (meta12);
	\draw[t] (meta0) to node[below] {$c_2$} (meta3);
	\draw[t,sloped] (meta12) to node[above,xshift=-0.5mm] {$r/\{q_0\}$} (metaf);
	\draw[t,sloped,draw=none] (meta12) to node[below] {$a$} (metaf);
	\draw[t] (meta3) to node[above] {$a$} (metaf);
	\draw[t,loop right] (meta3) to node[right] {$r/\{q_0\}$} (meta3);
	\draw[t,loop above] (metaf) to node[above] {$a$} (metaf);
\end{tikzpicture}
	}\fi

		\caption{One possible quotienting.}
		\label{fig:auto_two_out_of_three_2}
	\end{subfigure}
	\hfill
	\begin{subfigure}[b]{0.4\textwidth}
		\centering
	\ifextendedversion{%
		\includegraphics{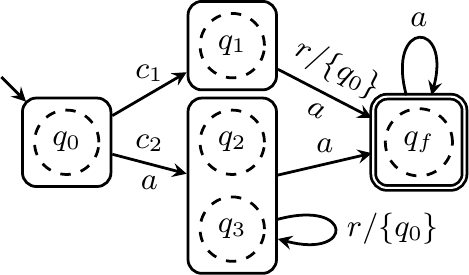}
	}\else{%
		\tikzsetnextfilename{auto_two_out_of_three_merged_2}
\begin{tikzpicture}
	\node[inmeta] (q0) {$q_0$};
	\node[outmeta,fit=(q0)] (meta0) {};
	\node[inmeta] (q2) [right=10mm of q0] {$q_2$};
	\node[inmeta] (q1) [above=3mm of q2] {$q_1$};
	\node[outmeta,fit=(q1)] (meta1) {};
	\node[inmeta] (q3) [below=2mm of q2] {$q_3$};
	\node[outmeta,fit=(q2)(q3)] (meta23) {};
	\node[inmeta] (qf) [right=12mm of q2] {$q_f$};
	\node[outmeta,fit=(qf),final] (metaf) {};
	\initialtransition[2mm]{meta0.north west) +(.5mm,-.5mm}
	\draw[t] (meta0) to node[above] {$c_1$} (meta1);
	\draw[t,draw=none] (meta0) to node[below] {$a$} (meta23);
	\draw[t] (meta0) to node[above] {$c_2$} (meta23);
	\draw[t,sloped] (meta1) to node[above] {$r/\{q_0\}$} (metaf);
	\draw[t,sloped,draw=none] (meta1) to node[below] {$a$} (metaf);
	\draw[t] (meta23) to node[above] {$a$} (metaf);
	\draw[t,loop right,distance=8mm] (meta23.east |- q3.east)+(0mm,1mm) to node[right] {$r/\{q_0\}$} +(0mm,-1mm);
	\draw[t,loop above] (metaf) to node[above] {$a$} (metaf);
\end{tikzpicture}
	}\fi

		\caption{Another possible quotienting.}
		\label{fig:auto_two_out_of_three_3}
	\end{subfigure}
	\caption{A \vpa and two possible quotientings due to unreachable stacks.}
	\label{fig:auto_two_out_of_three}
\end{figure}

Consider the \vpa in Figure~\ref{fig:auto_two_out_of_three_1}.
The states $q_1$ and $q_2$ have the same behavior for the internal symbol $a$ but different behaviors for the return symbol $r$ with stack symbol $q_0$:
Namely, state $q_1$ leads to the accepting state while $q_2$ has no respective return transition.
However, in $q_2$ it is generally impossible to take a return transition with stack symbol $q_0$ since $q_2$ can only be reached with an empty stack.
Thus the behavior for the stack symbol $q_0$ is ``undefined'' and we can merge $q_1$ and $q_2$ without changing the language.
The resulting \vpa is depicted in Figure~\ref{fig:auto_two_out_of_three_2}.

\subsubsection{Merging states requires a transitive relation}
Using the same argument as above, we can also merge the states $q_2$ and $q_3$; the result is depicted in Figure~\ref{fig:auto_two_out_of_three_3}.
For finite automata, mergeability of states is transitive.
However, here we cannot merge all three states $q_1$, $q_2$, and $q_3$ without changing the language because $q_1$ and $q_3$ have different behaviors for stack symbol $q_0$.
For \vpa, we have to check compatibility for each pair of states.

\subsubsection{Merging states means merging stack symbols}

\begin{figure}[t]
	\begin{subfigure}[b]{0.4\textwidth}
		\centering
	\ifextendedversion{%
		\includegraphics{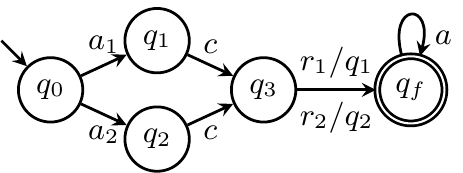}
	}\else{%
		\tikzsetnextfilename{auto_merge_stack_symbols_original}
\begin{tikzpicture}
	\node[state] (q0) {$q_0$};
	\node[state] (q1) [right=4mm of q0,yshift=5mm] {$q_1$};
	\node[state] (q2) [right=4mm of q0,yshift=-5mm] {$q_2$};
	\node[state] (q3) [right=4mm of q1,yshift=-5mm] {$q_3$};
	\node[state,final] (qf) [right=8mm of q3] {$q_f$};
	\initialtransition[5mm]{q0}
	\draw[t] (q0) to node[above] {$a_1$} (q1);
	\draw[t] (q0) to node[below] {$a_2$} (q2);
	\draw[t] (q1) to node[above] {$c$} (q3);
	\draw[t] (q2) to node[below] {$c$} (q3);
	\draw[t,sloped] (q3) to node[above] {$r_1/q_1$} node[below] {$r_2/q_2$} (qf);
	\draw[t,loop above] (qf) to node[right,very near end] {$a$} (qf);
\end{tikzpicture}
	}\fi

		\caption{A \vpa.}
		\label{fig:auto_merge_stack_symbols_1}
	\end{subfigure}
	\hfill
	\begin{subfigure}[b]{0.56\textwidth}
		\centering
	\ifextendedversion{%
		\includegraphics{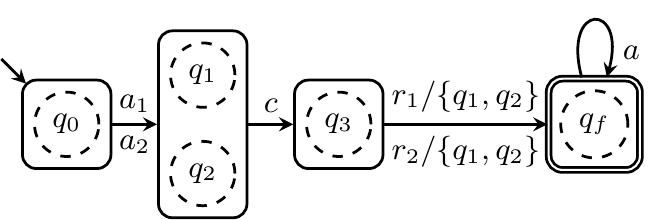}
	}\else{%
		\tikzsetnextfilename{auto_merge_stack_symbols_merged}
\begin{tikzpicture}
	\node[inmeta] (q0) {$q_0$};
	\node[outmeta,fit=(q0)] (meta0) {};
	\node[inmeta] (q1) [right=7mm of q0,yshift=5mm] {$q_1$};
	\node[inmeta] (q2) [right=7mm of q0,yshift=-5mm] {$q_2$};
	\node[outmeta,fit=(q1)(q2)] (meta12) {};
	\node[inmeta] (q3) [right=7mm of q1,yshift=-5mm] {$q_3$};
	\node[outmeta,fit=(q3)] (meta3) {};
	\node[inmeta] (qf) [right=19mm of q3] {$q_f$};
	\node[outmeta,final,fit=(qf)] (metaf) {};
	\initialtransition[2mm]{meta0.north west) +(.5mm,-.5mm}
	\draw[t] (meta0) to node[above] {$a_1$} node[below] {$a_2$} (meta12);
	\draw[t] (meta12) to node[above] {$c$} (meta3);
	\draw[t,sloped] (meta3) to node[above] {$r_1/\{q_1, q_2\}$} node[below] {$r_2/\{q_1, q_2\}$} (metaf);
	\draw[t,loop above] (metaf) to node[right,very near end] {$a$} (metaf);
\end{tikzpicture}
	}\fi

		\caption{A language-changing quotienting.}
		\label{fig:auto_merge_stack_symbols_2}
	\end{subfigure}
	\caption{A \vpa where quotienting of states leads to quotienting of stack symbols.}
	\label{fig:auto_merge_stack_symbols}
\end{figure}

Consider the \vpa in Figure~\ref{fig:auto_merge_stack_symbols_1}.
Since for (weakly-hierarchical) \vpa, stack symbols are states, merging the states $q_1$ and $q_2$ implicitly merges the stack symbols $q_1$ and $q_2$ as well.
After merging we receive the \vpa in Figure~\ref{fig:auto_merge_stack_symbols_2} which recognizes a different language (e.g., it accepts the word $a_1 c\, r_2$).

\subsection{Our approach}
We define an equivalence relation over \vpa states for quotienting that is language-preserving. 
This equivalence relation exploits our key observation, namely that we can merge states if they have the same behavior on equivalent reachable stacks, even if they have different behavior in general (Section~\ref{sec:quotienting}).
We show an encoding of such a relation as a Boolean partial maximum satisfiability (\pmaxsat) instance (Section~\ref{sec:maxsat}).
In order to solve these instances efficiently, we propose a greedy algorithm that finds suboptimal solutions (Section~\ref{sec:implementation}).
As a proof of concept, we implemented the algorithm and evaluated it in the context of 
the automata-based software verifier \ultimate{} \automizer~\cite{HeizmannDGLMSP16,HeizmannHP13} (Section~\ref{sec:experiments}).

\section{Visibly pushdown automata}\label{sec:automata}

In this section, we recall the basic definitions for visibly pushdown automata~\cite{AlurMadhusudan:2004} and quotienting.
After that, we characterize when an automaton is \emph{live}.

\subsection{Preliminaries}

\subsubsection{Alphabet}
A \emph{(visibly pushdown) alphabet} $\AlphaAll = \AlphaI \uplus \AlphaC \uplus \AlphaR$ is a partition consisting of three finite sets of \emph{internal} (\AlphaI), \emph{call} (\AlphaC), and \emph{return} (\AlphaR) symbols.
A \emph{word} is a sequence of symbols.
We denote the set of finite words over alphabet \AlphaAll by $\AlphaAll^*$ and the empty word by \eps.
As a convention we use $a$ for internal, $c$ for call, and $r$ for return symbols, \symb for any type of symbol, and $v, w$ for words.

\smallskip

The set of \emph{well-matched} words over \AlphaAll, \wellmatched, is the smallest set satisfying:
\begin{enumin}{;}{and}
	\item $\eps \in \wellmatched$
	
	\item if $w \in \wellmatched$, so is $wa$ for $a \in \AlphaI$
	
	\item if $v, w \in \wellmatched$, so is $vcwr$ for $c\,r \in \AlphaC \cdot \AlphaR$, and we call symbols $c$ and $r$ \emph{matching}
\end{enumin}
Given a word over \AlphaAll, for any return symbol we can uniquely determine whether the symbol is matching.
The set of \emph{matched-return} words, \matchedreturn, consists of all words where each return symbol is matching.
Clearly, $\wellmatched$ is a subset of $\matchedreturn$.

\subsubsection{Visibly pushdown automaton}
A \emph{visibly pushdown automaton} (\vpa) is a tuple $\A = (Q, \AlphaAll, \bot, \transAll, Q_0, F)$ with a finite set of states $Q$, a visibly pushdown alphabet \AlphaAll, a bottom-of-stack symbol $\bot \notin Q$, a transition relation $\transAll = (\transI, \transC, \transR)$ consisting of internal transitions $\transI \subseteq Q \times \AlphaI \times Q$, call transitions $\transC \subseteq Q \times \AlphaC \times Q$, and return transitions $\transR \subseteq Q \times \AlphaR \times Q \times Q$, a nonempty set of initial states $Q_0 \subseteq Q$, and a set of accepting states $F \subseteq Q$.

\smallskip

A \emph{stack} \stack is a word over $\stacks \defeq \{\bot\} \cdot Q^*$.
We write $\stack[i]$ for the $i$-th symbol of \stack.
%
A \emph{configuration} is a pair $(q, \stack) \in Q \times \stacks$.
%
A
\emph{run} $\run_\A(w)$ of \vpa \A on word $w = \symb_1 \symb_2 \cdots \in \AlphaAll^*$ is a sequence of configurations $(q_0, \stack_0) (q_1, \stack_1) \cdots$ according to the following rules (for $i \geq 0$):
\begin{enumerate}
	\item If $\symb_{i+1} \in \AlphaI$ then $(q_i, \symb_{i+1}, q_{i+1}) \in \transI$ and $\stack_{i+1} = \stack_i$.
	
	\item If $\symb_{i+1} \in \AlphaC$ then $(q_i, \symb_{i+1}, q_{i+1}) \in \transC$ and $\stack_{i+1} = \stack_i \concat q_i$.
	
	\item If $\symb_{i+1} \in \AlphaR$ then $(q_i, \symb_{i+1}, \hier{q}, q_{i+1}) \in \transR$ and $\stack_i = \stack_{i+1} \concat \hier{q}$.
\end{enumerate}

A run is \emph{initial} if $(q_0, \stack_0) \in Q_0 \times \{\bot\}$.
%
A configuration $(q, \stack)$ is \emph{reachable} if there exists some initial run $\run = (q_0, \stack_0)\linebreak[1](q_1, \stack_1) \cdots$ such that $(q_i, \stack_i) = (q, \stack)$ for some $i \geq 0$, and \emph{unreachable} otherwise.
Similarly, we say that a stack \stack is reachable (\resp unreachable) for state $q$ if $(q, \stack)$ is reachable (\resp unreachable).
%
A run of length $n$ is \emph{accepting} if $q_n \in F$.
A word $w \in \AlphaAll^*$ is \emph{accepted} if some initial run $\run_\A(w)$ is accepting.
The \emph{language} recognized by a \vpa \A is defined as $\lang(\A) \defeq \{w \mid w \text{ is accepted by } \A\}$.
%
A \vpa is \emph{deterministic} if it has one initial state and the transition relation is functional.

A \emph{finite automaton} (\fa) is a \vpa where $\AlphaC = \AlphaR = \emptyset$.

\begin{remark}
	We use a variant of \vpa that deviates from the \vpa model by Alur and Madhusudan~\cite{AlurMadhusudan:2004} in two ways:
	\begin{enumin}{.}{}
		\item We forbid return transitions when the stack is empty, i.e., the automata accept only matched-return words; this assumption is also used in other works~\cite{Srba:2009,KumarEtal:2006}%
		
		\item We consider \emph{weakly-hierarchical} \vpa where a call transition implicitly pushes the current state on the stack; this assumption is also a common assumption~\cite{KumarEtal:2006,ChervetWalukiewicz:2007}; every \vpa can be converted to weakly-hierarchical form with $2 |Q| |\AlphaAll|$ states~\cite{AlurMadhusudan:2009}
	\end{enumin}
	
	Both assumptions are natural in the context of computer programs:
	The call stack can never be empty, and return transitions always lead back to the respective program location after the corresponding call.
\end{remark}

\subsubsection{Quotienting}

For an equivalence relation over some set $S$, we denote the equivalence class of element $e$ by $[e]$; analogously, lifted to sets, let $[T] \defeq \{[e] \mid e \in T\}$.

Given a \vpa
	$\A = (Q, \AlphaAll, \bot, (\transI, \transC, \transR), Q_0, F)$
and an equivalence relation $\eqrel \, \subseteq Q \times Q$ on states, the \emph{quotient \vpa} is the \vpa
	$\A/_\eqrel \defeq ([Q], \AlphaAll, \bot, \transAll', [Q_0], [F])$
with $\transAll' = (\transI', \transC', \transR')$ and
\begin{itemize}
	\myitem $\transI' = \{([p], a, [p']) \mid \exists (q, a, q') \in \transI \quantDot q \in [p], q' \in [p']\}$,
	
	\myitem $\transC' = \{([p], c, [p']) \mid \exists (q, c, q') \in \transC \quantDot q \in [p], q' \in [p']\}$, and
	
	\myitem $\transR' = \{([p], r, [\hier{p}], [p']) \mid \exists (q, r, \hier{q}, q') \in \transR \quantDot q \in [p], q' \in [p']$, $\hier{q} \in [\hier{p}] \}$.
\end{itemize}

\smallskip
\goodbreak

\emph{Quotienting} is the process of merging states from the same equivalence class to obtain the quotient \vpa; this implicitly means merging stack symbols, too.

\subsection{Live visibly pushdown automata}

Let $\AlphaS \defeq Q \cup \{\bot\}$ be the \emph{stack alphabet}.
The function $\Top : \stacks \to \AlphaS$ returns the topmost symbol of a stack:
\begin{equation*}
	\Top(\stack) \defeq
	\begin{cases}
		\bot & \stack = \bot \\
		q & \stack = \stack' \concat q \text{ for some } \stack' \in \stacks
	\end{cases}
\end{equation*}

Given a state $q$, the function $\tops : Q \to 2^{\AlphaS}$ returns the topmost symbols of all reachable stacks \stack for $q$ (i.e., reachable configurations $(q, \stack)$):
\begin{equation*}
	\tops(q) \defeq \{ \Top(\stack) \mid \exists \stack \in \stacks \quantDot (q, \stack) \text{ is reachable} \}
\end{equation*}

For seeing that \tops is computable, consider a \vpa $\A = (Q, \AlphaAll, \bot, \transAll, Q_0, F)$.
The function \tops is the smallest function $f : Q \to 2^{\AlphaS}$ satisfying:
\begin{enumerate}
	\item $q \in Q_0 \implies \bot \in f(q)$
	
	\item $\hier{q} \in f(q), (q, a, q') \in \transI \implies \hier{q} \in f(q')$
	
	\item $(q, \stack)$ reachable for some \stack, $(q, c, q') \in \transC \implies q \in f(q')$
	
	\item $\hier{q} \in f(q), (q, r, \hier{q}, q') \in \transR \implies f(\hier{q}) \subseteq f(q')$
\end{enumerate}

We call a \vpa \emph{live} if the following holds.
For each state $q$ and for each internal and call symbol $\symb$ there is at least one outgoing transition $(q, \symb, q')$ to some state $q'$;
additionally, for each return symbol $r$ and state \hier{q} there is at least one outgoing return transition $(q, \hier{q}, r, q')$ to some state $q'$ if and only if $\hier{q} \in \tops(q)$.

Note that a live \vpa has a total transition relation in a weaker sense:
There are outgoing return transitions from state $q$ if and only if the respective transition can be taken in at least one run.
That is, we forbid return transitions when no corresponding configuration is reachable.
Every \vpa can be converted to live form by adding one sink state.

\begin{remark}
	For live \vpa \A, a run $\run_\A(w)$ on word $w$ can only ``get stuck'' in an empty-stack configuration, i.e., if $w = v_1 r \, v_2$ with $r \in \AlphaR$ such that $\run_\A(v_1) = (q_0, \stack_0) \cdots (q_k, \bot)$ for some $q_k \in Q$.
	If $w \in \matchedreturn$, no run gets stuck.
\end{remark}

For the remainder of the paper, we fix a live \vpa $\A = (Q, \AlphaAll, \bot, \transAll, Q_0, F)$.
We sometimes refer to this \vpa as the \emph{input automaton}.

\section{A quotienting relation for VPA}\label{sec:quotienting}

In this section, we define an equivalence relation on the states of a \vpa that is useful for quotienting, i.e., whose respective quotient \vpa is language-preserving.

\smallskip

We first need the notion of \emph{closure under successors} for each kind of symbol.

\smallskip

Let $R \subseteq Q \times Q$ be a binary relation over states and let $p, q, \hier{p}, \hier{q} \in Q$ be states. We say that $R$ is
\begin{itemize}
	\myitem \emph{closed under internal successors for $(p, q)$} if for each internal symbol $a \in \AlphaI$
	\begin{itemize}
		\item for all $(p, a, p') \in \transI$ there exists $(q, a, q') \in \transI$ s.t. $(p', q')\in R$ and
		
		\item for all $(q, a, q') \in \transI$ there exists $(p, a, p') \in \transI$ s.t. $(p', q')\in R$,
	\end{itemize}
	
	\myitem \emph{closed under call successors for $(p, q)$} if for each call symbol $c \in \AlphaC$
	\begin{itemize}
		\item for all $(p, c, p') \in \transC$ there exists $(q, c, q') \in \transC$ s.t. $(p', q')\in R$ and
		
		\item for all $(q, c, q') \in \transC$ there exists $(p, c, p') \in \transC$ s.t. $(p', q')\in R$,
	\end{itemize}
	
	\myitem \emph{closed under return successors for $(p, q, \hier{p}, \hier{q})$} if for each return symbol $r \in \AlphaR$
	\begin{itemize}
		\item for all $(p, r, \hier{p}, p') \in \transR$ there exists $(q, r, \hier{q}, q') \in \transR$ s.t. $(p', q')\in R$ and
		
		\item for all $(q, r, \hier{q}, q') \in \transR$ there exists $(p, r, \hier{p}, p') \in \transR$ s.t. $(p', q')\in R$.
	\end{itemize}
\end{itemize}

We are ready to present an equivalence relation that is useful for quotienting using a fixpoint characterization.

\begin{definition}[Reachability-aware quotienting relation]\label{def:merge_rel}
	Let \A be a \vpa and $R\subseteq Q\times Q$ be an equivalence relation over states.
	We say that $R$ is a \emph{\Qrel} if for each pair of states $(p, q)\in R$ the following constraints hold.
	\begin{enumerate}[label=(\roman*),ref=(\roman*)]
		\item \label{item:Erel_constraint_accept} State $p$ is accepting if and only if state $q$ is accepting $(p \in F \iff q \in F)$.
		
		\item \label{item:Erel_constraint_internal} $R$ is closed under internal successors for $(p, q)$.
		
		\item \label{item:Erel_constraint_call} $R$ is closed under call successors for $(p, q)$.
		
		\item \label{item:Erel_constraint_return} For each pair of states (\resp topmost stack symbols) $(\hier{p}, \hier{q}) \in R$,
		\begin{itemize}
			\myitem $R$ is closed under return successors for $(p, q, \hier{p}, \hier{q}$), or
			
			\myitem no configuration $(q, \stack_q)$ with $\hier{q} = \Top(\stack_q)$ is reachable, or
			
			\myitem no configuration $(p, \stack_p)$ with $\hier{p} = \Top(\stack_p)$ is reachable.
		\end{itemize}
	\end{enumerate}
\end{definition}

\begin{remark}
	``No configuration $(q, \stack_q)$ with $\hier{q} = \Top(\stack_q)$ is reachable'' is equivalent to ``$\hier{q} \notin \tops(q)$''.
	The equality relation $\{ (q, q) \mid q \in Q \}$ is a \Qrel for any \vpa; the respective quotient \vpa is isomorphic to the input automaton.
\end{remark}

\begin{example}
	Consider again the \vpa from Figure~\ref{fig:auto_two_out_of_three_1}.
	We claim that the relation
	$
		R \defeq \{ (q, q) \mid q \in Q \} \cup \{ (q_1, q_2), (q_2, q_1) \}
	$
	is a \Qrel.
	Note that it corresponds to the quotient \vpa from Figure~\ref{fig:auto_two_out_of_three_2}.
	First we observe that $R$ is an equivalence relation.
	We check the remaining constraints only for the two pairs $(q_1, q_2)$ and $(q_2, q_1)$.
	Both states are not accepting.
	Relation $R$ is closed under internal (here: $a$) and call (here: none, i.e., implicitly leading to a sink) successors.
	The return transition constraint is satisfied because in state $q_2$ no stack with topmost symbol $q_0$ is reachable ($q_0 \notin \tops(q_2)$).
\end{example}

\smallskip

We want to use a \Qrel for language-preserving quotienting.
For this purpose we need to make sure that unreachable configurations in Definition~\ref{def:merge_rel} do not enable accepting runs that are not possible in the original \vpa.
In the remainder of this section, we show that this is indeed the case.

\smallskip

Given an equivalence relation $R \subseteq Q\times Q$ on states, we call a stack $\stack$ the \emph{$R$-quotienting} of some stack $\stack'$ of the same height if either $\stack = \stack' = \bot$ or for all $i = 2, \dots, |\stack|$ each symbol $\stack[i]$ is the equivalence class of $\stack'[i]$, i.e., $\stack'[i] \in [\stack[i]]$.
We write $\stack' \in [\stack]$ in this case.
(We compare stacks only for $i \geq 2$ because the first stack symbol is always $\bot$.)

\begin{lemma}[Corresponding run]\label{lem:correspondingRun}
	Let \A be a \vpa and \eqrel be some \Qrel for \A.
	Then for any matched-return word $w$ and respective run
	\begin{equation*}
		\run_{\A/\eqrel}(w) = ([p_0], \bot) \cdots ([p_n], [\stack_n])
	\end{equation*}
	with $p_0 \in Q_0$ in $\A/_\eqrel$ there is some corresponding run
	\begin{equation*}
		\run_{\A}(w) = (q_0', \bot) \cdots (q_n', \stack_n')
	\end{equation*}
	in \A such that $q_i' \in [p_i]$ and $\stack_i' \in [\stack_i]$ for all $i \geq 0$, and furthermore $q_0' \in Q_0$.
\end{lemma}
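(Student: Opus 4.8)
The plan is to prove the statement by induction on the length $i$ of the processed prefix, building the corresponding run of \A configuration by configuration. I would actually prove a slightly strengthened claim: for every $i$ there is an \emph{initial} run $(q_0', \bot) \cdots (q_i', \stack_i')$ of \A with $q_j' \in [p_j]$ and $\stack_j' \in [\stack_j]$ for all $j \le i$. The strengthening costs nothing, since by definition every configuration occurring in an initial run is reachable; this gives me, for free, that each $(q_j', \stack_j')$ is reachable, a fact I will need in the return case. For the base case $i = 0$ I would simply take $q_0' \defeq p_0$: by assumption $p_0 \in Q_0$, so $q_0' \in Q_0 \cap [p_0]$, and with $\stack_0' \defeq \bot = \stack_0$ the invariant holds.

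For the inductive step I would split on the type of $\symb_{i+1}$. In each case the quotient edge taken by $\run_{\A/\eqrel}(w)$ unfolds, by the definition of $\transAll'$, into a \emph{witness} transition of \A between \emph{some} representatives of the relevant classes; the work is to relocate its source to the specific representative $q_i'$ fixed by the induction hypothesis. For an internal symbol $a$, the edge $([p_i], a, [p_{i+1}]) \in \transI'$ yields $(q, a, q') \in \transI$ with $q \in [p_i]$ and $q' \in [p_{i+1}]$. Since $q_i', q \in [p_i]$ and $R$ is an equivalence relation, $(q_i', q) \in R$, so constraint~\ref{item:Erel_constraint_internal} (closure under internal successors for $(q_i', q)$) turns the witness into some $(q_i', a, q_{i+1}') \in \transI$ with $q_{i+1}' \eqrel q'$, hence $q_{i+1}' \in [p_{i+1}]$; I set $\stack_{i+1}' \defeq \stack_i'$. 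The call case is identical via constraint~\ref{item:Erel_constraint_call}, except that I push the current representative, $\stack_{i+1}' \defeq \stack_i' \concat q_i'$; then $\stack_{i+1}' \in [\stack_{i+1}]$ holds because the quotient run pushes the class symbol $[p_i]$, $q_i' \in [p_i]$ matches it, and $\stack_i' \in [\stack_i]$ by induction.

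The return case is the crux and the step I expect to require the most care. Here the edge $([p_i], r, [\hier p], [p_{i+1}]) \in \transR'$, with $[\hier p] = \Top([\stack_i])$, gives a witness $(q, r, \hier q, q') \in \transR$ with $q \in [p_i]$, $\hier q \in [\hier p]$, $q' \in [p_{i+1}]$. Writing $\hier{q}' \defeq \Top(\stack_i')$, the hypothesis $\stack_i' \in [\stack_i]$ forces $\hier{q}' \in [\hier p]$, so both $(q_i', q) \in R$ and $(\hier{q}', \hier q) \in R$. I then invoke constraint~\ref{item:Erel_constraint_return} for the pairs $(q_i', q)$ and $(\hier{q}', \hier q)$, and to obtain its first disjunct (closure under return successors) I must rule out the two ``unreachability'' disjuncts. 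The disjunct ``no reachable $(q_i', \stack)$ with $\Top(\stack) = \hier{q}'$'' fails because the hypothesis supplies the reachable configuration $(q_i', \stack_i')$, whose top is exactly $\hier{q}'$. The disjunct ``no reachable $(q, \stack)$ with $\Top(\stack) = \hier q$'' fails because \A is \emph{live}: the mere existence of the witness return transition $(q, r, \hier q, q')$ forces $\hier q \in \tops(q)$, i.e., a reachable configuration of $q$ with top $\hier q$. Hence closure under return successors for $(q_i', q, \hier{q}', \hier q)$ holds, and applying it to the witness produces $(q_i', r, \hier{q}', q_{i+1}') \in \transR$ with $q_{i+1}' \eqrel q'$, so $q_{i+1}' \in [p_{i+1}]$; taking $\stack_{i+1}'$ to be $\stack_i'$ with its top popped preserves $\stack_{i+1}' \in [\stack_{i+1}]$.

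I would finish by noting where the hypotheses are spent. Liveness is used \emph{only} in the return case, precisely to discharge the second unreachability disjunct; without it the witness could originate from a class representative whose stack symbol is never on a reachable stack, and closure could not be forced. The matched-return assumption on $w$ is what guarantees that at each return the relevant stacks are nonempty (their top is a state, not $\bot$), so the pops in the lifted run are well defined and mirror those of the given quotient run. All other steps are the routine unfolding of the quotient transition relation together with closure constraints~\ref{item:Erel_constraint_internal} and~\ref{item:Erel_constraint_call}, and the accepting-state condition~\ref{item:Erel_constraint_accept} is not even needed for this lemma.
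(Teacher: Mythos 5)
Your proof is correct and takes essentially the same route as the paper's: induction on the length of the word, unfolding each quotient transition into a witness transition of \A, relocating that witness to the representative fixed by the induction hypothesis via the closure constraints, and, in the return case, using reachability of the constructed configuration together with liveness of \A to discharge the two unreachability disjuncts of constraint~\ref{item:Erel_constraint_return}. Your return case merely spells out explicitly what the paper's proof states more tersely.
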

\begin{proof}
	The proof is by induction on the length of $w$.
	The case for $w = \eps$ is trivial.
	Now assume $w' = w \concat \symb$ for $\symb \in \AlphaAll$ and fix some run $\run_{\A/\eqrel}(w') = ([p_0], \bot) \cdots ([p_n], [\stack_n]) \concat ([p_{n+1}], [\stack_{n+1}])$.
	The hypothesis ensures that there exists a corresponding run for the prefix $\run_{\A}(w) = (q_0', \bot) \cdots (q_n', \stack_n')$ s.t.\ $q_n' \in [p_n]$ and $\stack_n' \in [\stack_n]$.
	We will extend this run in each of the three cases for symbol $\symb$.
	
	\smallskip
	
	1) If $\symb \in \AlphaI$, then, since there is a transition $([p_n], \symb, [p_{n+1}]) \in \transI{}_{/\eqrel}$, there exist some states $q_n'' \in [p_n]$ and $q_{n+1}'' \in [p_{n+1}]$ s.t.\ $(q_n'', \symb, q_{n+1}'') \in \transI$ (from the definition of quotienting).
	Using that \eqrel is closed under internal successors, there also exists a target state $q_{n+1}' \in [p_{n+1}]$ s.t.\ $(q_n', \symb, q_{n+1}') \in \transI$.
	Additionally, because $\symb \in \AlphaI$, we have that $\stack_{n+1}' = \stack_n' \in [\stack_n] = [\stack_{n+1}]$ by the hypothesis.
	
	2) If $\symb \in \AlphaC$, a similar argument holds, only this time the stack changes.
	We have that $\stack_{n+1}' = \stack_n' \concat q_n' \in [\stack_n \concat p_n] = [\stack_{n+1}]$ by the hypothesis.
	
	3) If $\symb \in \AlphaR$, then the configuration $(q_n', \stack_n')$ is reachable (witnessed by the run $\run_{\A}(w)$).
	Since \eqrel is closed under return successors for all states in $[p_n]$ (modulo unreachable configurations), for each top-of-stack symbol $\hier{q} \in [\Top(\stack_n')]$ s.t.\ $(q_n', \stack'' \concat \hier{q})$ is reachable for some stack $\stack''$ there exists a corresponding return transition $(q_n', \symb, \hier{q}, q_{n+1}') \in \transR$ with $q_{n+1}' \in [p_{n+1}]$; in particular, this holds for $\hier{q} = \Top(\stack_n')$.
	Recall that \A is assumed to be live, which ensures that every return transition that exists in the quotient \vpa has such a witness.
	The stack property $\stack_{n+1}'\in [\stack_{n+1}]$ follows from the hypothesis.
	\qed
\end{proof}

From the above lemma we can conclude that quotienting with a \Qrel preserves the language.

\goodbreak

\begin{theorem}[Language preservation of quotienting]
	Let \A be a \vpa and let \eqrel be a \Qrel on the states of \A.
	Then $L(\A) = L(\A/_\eqrel)$.
\end{theorem}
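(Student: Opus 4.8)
The plan is to prove the two language inclusions $L(\A) \subseteq L(\A/_\eqrel)$ and $L(\A/_\eqrel) \subseteq L(\A)$ separately, with the second (harder) direction following almost directly from Lemma~\ref{lem:correspondingRun}. Before starting, I would record one preliminary fact that both directions rely on: every word accepted by a \vpa in this model is matched-return. This holds because the return transition relation satisfies $\transR \subseteq Q \times \AlphaR \times Q \times Q$, so a return step of a run always pops a genuine state symbol $\hier{q} \in Q$ rather than $\bot$; hence no accepting run can contain an unmatched return. Consequently $L(\A) \subseteq \matchedreturn$ and $L(\A/_\eqrel) \subseteq \matchedreturn$, which in particular guarantees that Lemma~\ref{lem:correspondingRun} is applicable to every word in $L(\A/_\eqrel)$.

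For the inclusion $L(\A) \subseteq L(\A/_\eqrel)$, I would take an accepting initial run $\run_{\A}(w) = (q_0, \bot) \cdots (q_n, \stack_n)$ with $q_0 \in Q_0$ and $q_n \in F$, and project it symbolwise onto equivalence classes to obtain the candidate run $([q_0], \bot) \cdots ([q_n], [\stack_n])$ in $\A/_\eqrel$. Each internal, call, and return step lifts to a valid transition of $\A/_\eqrel$ directly by the definition of the quotient transition relation $\transAll'$; for the return case one additionally checks that the projected stack satisfies $[\stack_i] = [\stack_{i+1}] \concat [\hier{q}]$, which is exactly the $R$-quotienting of stacks. Since $[q_0] \in [Q_0]$ and $[q_n] \in [F]$, the projected run is accepting, so $w \in L(\A/_\eqrel)$. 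This direction uses none of the reachability-aware subtleties and is pure bookkeeping.

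For the reverse inclusion $L(\A/_\eqrel) \subseteq L(\A)$, I would take any $w \in L(\A/_\eqrel)$ together with an accepting initial run $\run_{\A/\eqrel}(w) = ([p_0], \bot) \cdots ([p_n], [\stack_n])$, choosing the representatives so that $p_0 \in Q_0$ (possible because the run starts in $[Q_0]$) and noting that $[p_n] \in [F]$. Since $w \in \matchedreturn$, Lemma~\ref{lem:correspondingRun} supplies a corresponding run $\run_{\A}(w) = (q_0', \bot) \cdots (q_n', \stack_n')$ in \A with $q_i' \in [p_i]$ for all $i$ and $q_0' \in Q_0$. It remains to verify that this run is accepting, and this is the one place where the structure of the \Qrel is used: from $[p_n] \in [F]$ we obtain $p_n \in F$, and since $q_n' \in [p_n]$, constraint~\ref{item:Erel_constraint_accept} (accepting if and only if accepting) forces $q_n' \in F$. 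Hence $\run_{\A}(w)$ is an accepting initial run and $w \in L(\A)$.

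I do not expect a genuine obstacle in this theorem, since Lemma~\ref{lem:correspondingRun} already absorbs the real difficulty, namely matching runs across equivalent but possibly unreachable stacks. The only points that need care are invoking the lemma legitimately, i.e.\ confirming that every accepted word is matched-return so that the lemma's hypothesis is met, and transferring the acceptance condition correctly via constraint~\ref{item:Erel_constraint_accept}; everything else reduces to preservation of initial states and to the definition of the quotient transition relation.
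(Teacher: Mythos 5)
Your proof is correct and takes essentially the same route as the paper: the inclusion $L(\A) \subseteq L(\A/_\eqrel)$ is immediate from the definition of quotienting, and the inclusion $L(\A/_\eqrel) \subseteq L(\A)$ follows by invoking Lemma~\ref{lem:correspondingRun} to obtain a corresponding initial run in \A and transferring acceptance via constraint~\ref{item:Erel_constraint_accept} (the paper merely phrases this direction as a contradiction rather than directly). If anything, you are slightly more careful than the paper, which applies the lemma --- stated only for matched-return words --- without your preliminary observation that every word accepted by a \vpa in this model (and hence by $\A/_\eqrel$) is necessarily matched-return, since a return step always pops a symbol from $Q$ and thus no run exists on a word with an unmatched return.
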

\begin{proof}
	Clearly, $\lang(\A) \subseteq \lang(\A/_\eqrel)$ for any equivalence relation \eqrel.
	We show the other inclusion by means of a contradiction.
	
	Assume there exists a word $w$ s.t.\ $w \in L(\A/_\eqrel) \setminus L(\A)$.
	By assumption, in $\A/_\eqrel$ there is an initial accepting run $\run_{\A/_\eqrel}(w)$.
	Then, by Lemma~\ref{lem:correspondingRun}, there is a corresponding run $\run_\A(w)$, and furthermore this run is initial.
	
	The run $\run_\A(w)$ is also accepting by the property that $[p] \in [F]$ if and only if $q \in F$ for all $q \in [p]$ (cf. Property~\ref{item:Erel_constraint_accept} of a \Qrel).
	\qed
\end{proof}

\section{Computing quotienting relations}\label{sec:maxsat}

In Section~\ref{sec:quotienting}, we introduced the notion of a \Qrel and showed how we can use it to minimize \vpa while preserving the language.
In this section, we show how we can compute a \Qrel.
For this purpose, we provide an encoding as a partial maximum satisfiability problem (\pmaxsat).
From a (in fact, any) solution, i.e., satisfying assignment, we can synthesize a \Qrel.
While this does not result in the coarsest \Qrel possible, the relation obtained is locally optimal, i.e., there is no coarser \Qrel that is a strict superset.

\subsection{Computing RAQ relations}

Note that in general there are many possible instantiations of a \Qrel, e.g., the trivial equality relation which is not helpful for minimization.
Since we are interested in reducing the number of states, we prefer coarser relations over finer relations.

To obtain a coarse relation, we describe an encoding of the \Qrel constraints as an instance of the \pmaxsat problem~\cite{ChaIKM97,FuMalik:2006}.
Such a problem consists of a propositional logic formula in conjunctive normal form with each clause being marked as either \emph{hard} or \emph{soft}.
The task is to find a truth assignment such that all hard clauses are satisfied and the number of the satisfied soft clauses is maximal.

\mysubsubsection{SAT encoding}

For the moment, we ignore soft clauses and provide a standard \sat encoding of the constraints.
The encoding has the property
that any satisfying assignment induces a valid \Qrel \eqrel.

Let \true and \false be the Boolean constants.
We need $\On{n^2}$ variables of the form $\X{p, q}$ where $p$ and $q$ are states of the input automaton.
The idea is that $p \eqrel q$ holds if we assign the value \true to \X{p, q}.
(We ignore the order of $p$ and $q$ as \eqrel must be symmetric.)
We express the constraints from Definition~\ref{def:merge_rel} as follows.

\smallskip

Consider the constraint~\ref{item:Erel_constraint_accept}.
For each pair of states $(p, q)$ not satisfying the constraint we introduce the clause
\begin{equation}\label{eq:clauseNonsimulation}
	\neg \X{p, q}.
\end{equation}

\smallskip

Consider the constraints~\ref{item:Erel_constraint_internal}, \ref{item:Erel_constraint_call}, \ref{item:Erel_constraint_return}.
For each transition $(p, a, p') \in \transI$, $(p, c, p') \in \transC$, and $(p, r, \hier{p}, p') \in \transR$ and all states $q$ and $\hier{q}$ we respectively construct one of the following clauses.
\begin{align}
	\neg \X{p, q} &\lor (\X{p', q_1^a} \lor \dots \lor \X{p', q_{k_a}^a})
		\label{eq:clauseITransitions} \\
	\neg \X{p, q} &\lor (\X{p', q_1^c} \lor \dots \lor \X{p', q_{k_c}^c})
		\label{eq:clauseCTransitions} \\
	\neg \X{p, q} &\lor \neg \X{\hier{p}, \hier{q}} \lor (\X{p', q_1^r} \lor \dots \lor \X{p', q_{k_r}^r})
		\label{eq:clauseRTransitions} 
\end{align}
Here the $q_i^a$/$q_i^c$ are the respective $a$/$c$-successors of $q$ and the $q_i^r$ are the $r$-successors of $q$ with stack symbol $\hier{q}$.
To account for the unreachable configuration relaxation, we may omit return transition clauses~\eqref{eq:clauseRTransitions} where $\hier{p} \notin \tops(p)$ or $\hier{q} \notin \tops(q)$.

We also need to express that \eqrel is an equivalence relation, i.e., we need additional reflexivity clauses
\begin{equation}\label{eq:clauseReflexivity}
	\X{q_1, q_1}
\end{equation}
and transitivity clauses
\begin{equation}\label{eq:clauseTransitivity}
	\neg \X{q_1, q_2} \lor \neg \X{q_2, q_3} \lor \X{q_1, q_3}
\end{equation}
for any distinct states $q_1$, $q_2$, $q_3$ (assuming there are least three states).
Recall that our variables already ensure symmetry.

\smallskip

Let \cnf be the conjunction of all clauses of the form~\eqref{eq:clauseNonsimulation}, \eqref{eq:clauseITransitions}, \eqref{eq:clauseCTransitions}, \eqref{eq:clauseRTransitions}, \eqref{eq:clauseReflexivity}, and \eqref{eq:clauseTransitivity}.
All assignments satisfying \cnf represent valid \Qrel{}s.

However, we know that the assignment
\begin{equation*}
	\X{p, q} \mapsto
	\begin{cases}
		\true & p = q \\
		\false & \text{otherwise}
	\end{cases}
\end{equation*}
corresponding to the equality relation is always trivially satisfying.
Such an assignment is not suited for our needs.
We consider an assignment \emph{optimal} if it represents a \Qrel with a coarsest partition.

\mysubsubsection{PMax-SAT encoding}
We now describe an extension of the \sat encoding to a \pmaxsat encoding.
In this setting, we can enforce that the number of variables that are assigned the value \true is maximal.

\smallskip

As an addition to \cnf, we add for every two states $p$, $q$ with $p \neq q$ the clause
\begin{align}
	\X{p, q}
		\label{eq:clauseSingleton}
\end{align}

and finally we consider all old clauses, i.e., clauses of the form~\eqref{eq:clauseNonsimulation}--\eqref{eq:clauseTransitivity}, as hard clauses and all clauses of the form~\eqref{eq:clauseSingleton} as soft clauses.

\subsection{Locally maximal RAQ relation}\label{sec:local_optimum}
Note that an assignment obtained from the \pmaxsat encoding does not necessarily give us a coarsest \Qrel.
Consider a \vpa with seven states $q_0, \dots, q_6$ and the partition $\big\{ \{q_0, q_1, q_2, q_3\}$, $\{q_4\}$, $\{q_5\}$, $\{q_6\} \big\}$.
Here we set six variables to \true (all pairs of states from the first set).
However, the partition $\big\{ \{q_0, q_1, q_2\}$, $\{q_3, q_4\}$, $\{q_5, q_6\} \big\}$ is coarser, and yet we only set five variables to \true.

\medskip

Despite not finding the globally maximal solution, we can establish local maximality.
\begin{theorem}[Local maximum]\label{thm:local_optimum}
	A satisfying assignment of the \pmaxsat instance corresponds to a \Qrel such that no strict superset of the relation is also a \Qrel.
\end{theorem}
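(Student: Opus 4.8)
The plan is to argue by contradiction, leveraging the optimality that is built into the \pmaxsat formulation. Let $\alpha$ be the given satisfying assignment and let $R$ be the \Qrel it induces; by construction $\alpha$ satisfies every hard clause, and being a \pmaxsat solution it maximizes the number of satisfied soft clauses. Suppose for contradiction that some \Qrel $R'$ is a strict superset of $R$. I would then define the assignment $\alpha'$ by mapping $\X{p, q}$ to \true exactly when $(p, q) \in R'$, and show that $\alpha'$ satisfies all hard clauses while satisfying strictly more soft clauses than $\alpha$, contradicting the optimality of $\alpha$.

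The enabling step is the two-way correspondence between \Qrel{}s and assignments satisfying \cnf: an assignment satisfies all hard clauses if and only if the relation it encodes is a \Qrel. One direction is already recorded in the construction (``All assignments satisfying \cnf represent valid \Qrel{}s''); the converse is what I need for $\alpha'$. I would verify it by inspecting each clause family in turn. The accepting clauses~\eqref{eq:clauseNonsimulation}, the reflexivity clauses~\eqref{eq:clauseReflexivity}, and the transitivity clauses~\eqref{eq:clauseTransitivity} translate directly into the requirement that $R'$ be an equivalence relation respecting $F$, which holds because $R'$ is a \Qrel; the internal and call clauses~\eqref{eq:clauseITransitions} and~\eqref{eq:clauseCTransitions} are satisfied precisely because $R'$ is closed under internal and call successors.

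The delicate point — and the step I expect to be the main obstacle — is matching the return clauses~\eqref{eq:clauseRTransitions} against the reachability relaxation of Definition~\ref{def:merge_rel}\ref{item:Erel_constraint_return}. Recall that a return clause is retained in \cnf only when $\hier{p} \in \tops(p)$ and $\hier{q} \in \tops(q)$. For such a retained clause, the second and third disjuncts of constraint~\ref{item:Erel_constraint_return} both fail (using the remark that ``$\hier{q} \notin \tops(q)$'' is equivalent to the non-reachability of a configuration of $q$ with that top-of-stack symbol, and likewise for $p$), so the definition forces $R'$ to be closed under return successors for $(p, q, \hier{p}, \hier{q})$; this closure is exactly what makes the clause true under $\alpha'$. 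Conversely, the omitted clauses impose no obligation, matching the relaxation exactly. Hence $\alpha'$ satisfies every hard clause, and care at this return-relaxation interface is the crux of the argument.

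Finally I would run the counting argument. Since $R \subsetneq R'$ and both relations are symmetric and contain the diagonal, there is at least one unordered pair $\{p, q\}$ with $p \neq q$ lying in $R'$ but not in $R$. Each soft clause~\eqref{eq:clauseSingleton} is satisfied exactly when its pair is merged, so the set of soft clauses satisfied by $\alpha$ is a strict subset of those satisfied by $\alpha'$. Thus $\alpha'$ satisfies all hard clauses and strictly more soft clauses than $\alpha$, contradicting the maximality of the \pmaxsat solution $\alpha$. Therefore no strict superset of $R$ is a \Qrel, which is precisely the claimed local maximality.
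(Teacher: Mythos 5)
Your proof is correct, and it is a sound, considerably more rigorous rendering of the idea behind the paper's own two-sentence proof; the differences are nevertheless worth spelling out. The paper argues \emph{locally}: it asserts that in the obtained assignment ``no further variable $\X{p,q}$ can be assigned the value \true'' and then reads variables as pairs. You argue \emph{globally}: from a hypothetical strictly larger \Qrel $R'$ you build the assignment $\alpha'$, establish the converse direction of the encoding correspondence (every \Qrel induces an assignment satisfying all hard clauses, where the retained return clauses~\eqref{eq:clauseRTransitions} are discharged exactly because both reachability disjuncts of Definition~\ref{def:merge_rel}, constraint~\ref{item:Erel_constraint_return}, fail when $\hier{p} \in \tops(p)$ and $\hier{q} \in \tops(q)$), and then count soft clauses~\eqref{eq:clauseSingleton} to contradict optimality. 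That converse direction and the counting step are precisely the content hidden behind the paper's ``clear from the construction,'' and your identification of the return-clause/\tops interface as the crux is exactly right. What the paper's local formulation buys, however, is robustness: your contradiction only ever compares $\alpha$ with the pointwise-larger assignment $\alpha'$, yet you invoke full global optimality of the \pmaxsat solution. The greedy solver of Section~\ref{sec:implementation} returns assignments that are merely extension-maximal (no set of \false variables can be flipped to \true while keeping all hard clauses satisfied) and in general not globally optimal, and the paper's claim there that Theorem~\ref{thm:local_optimum} ``still holds with this strategy'' needs exactly that weaker property. Your argument covers this case verbatim if, in the final step, you replace ``contradicts the maximality of the number of satisfied soft clauses'' by ``contradicts extension-maximality of $\alpha$,'' since $\alpha'$ is an extension of $\alpha$; stated that way, your proof subsumes both the optimal and the greedy setting at no extra cost.
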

\begin{proof}
	It is clear from the construction that in the obtained assignment, no further variable \X{p, q} can be assigned the value \true.
	Each such variable determines membership of the symmetric pairs $(p, q)$ and $(q, p)$ in the \Qrel.
	\qed
\end{proof}

\section{Experimental evaluation}\label{sec:evaluation}

In this section, we report on our implementation and its potential in practice.

\subsection{Implementation}\label{sec:implementation}

Initially, we apply the following preprocessing steps for reducing the complexity.
First, we remove unreachable and dead states and make the \vpa live for return transitions (we do not require that the \vpa is total for internal or call transitions).
Second, we immediately replace variables $\X{p}$ by \true (reflexivity).
Third, we construct an initial partition of the states and replace variables $\X{p,q}$ by \false if $p$ and $q$ are not in the same block.
This partition is the coarsest fixpoint of a simple partition refinement such that states in the same block have the same acceptance status, the same outgoing internal and call symbols, and, if all states in a block have a unique successor under an internal/call symbol, those successors are in the same block (cf.\ Definition~\ref{def:merge_rel} and Hopcroft's algorithm~\cite{Hopcroft:1971}).

\smallskip

Optimally solving a \pmaxsat instance is an \np-complete problem.
Expectedly, a straightforward implementation of the algorithm presented in Section~\ref{sec:maxsat} using an off-the-shelf \pmaxsat solver does not scale to interesting problems (see also \iftacasversion{the extended version~\cite{extendedVersionArxiv}\else{Appendix~\ref{app:external_solver}}\fi\xspace).
Therefore, we implemented a domain-specific greedy \pmaxsat solver that only maximizes the satisfied soft clauses locally.

Our solver is interactive, i.e., clauses are added one after another, and propagation is applied immediately.
After adding the last clause, the solver chooses some unset variable and first sets it to \true optimistically.
Theorem~\ref{thm:local_optimum} still holds with this strategy.
Apart from that, the solver follows the standard DPLL algorithm and uses no further enhancements found in modern \sat solvers.

\begin{remark}
	If the \vpa is deterministic, we obtain a Horn clause system.
	Then the above algorithm never needs to backtrack for more than one level, as the remaining clauses can always be satisfied by assigning \false to the variables.
\end{remark}

\goodbreak

The main limitation of the approach is the memory consumption.
Clearly, the majority of clauses are those expressing transitivity.
Therefore, we implemented and integrated a solver for the theory of equality:
When a variable $\X{p, q}$ is set to \true, this solver returns all variables that must also be set to \true for consistency.
That allowed us to omit the transitivity clauses (see\iftacasversion{~\cite{extendedVersionArxiv}}\else{ Appendix~\ref{app:transitivity_checker}}\fi\xspace for details).

\subsection{Experiments}\label{sec:experiments}

Our evaluation consists of three parts.
First, we evaluate the impact of our minimization on an application, namely the software verifier \ultimate{} \automizer.
Second, we evaluate the performance of our minimization on automata that were produced by \ultimate{} \automizer.
Third, we evaluate the performance of our minimization on a set of random automata.
All experiments are performed on a PC with an Intel i7 3.60 GHz CPU running Linux.

\mysubsubsection{Impact on the software verifier \ultimate{} \automizer}
The software verifier \ultimate{} \automizer~\cite{HeizmannDGLMSP16}
follows an automata-based approach~\cite{HeizmannHP13}
in which sets of program traces are represented by automata.
The approach can be seen as a \cegar-style algorithm in which an abstraction is iteratively refined.
This abstraction is represented as a weakly-hierarchical \vpa where the automaton stack only keeps track of the states from where function calls were triggered.

For our evaluation, we run \ultimate{} \automizer on a set of C programs in two different modes.
In the mode ``No minimization'' no automata minimization is applied.
In the mode ``Minimization'' we apply our minimization in each iteration of the \cegar loop to the abstraction if it has less than $10{,}000$ states. (In cases where the abstraction has more than $10{,}000$ states the minimization can be too slow to pay off on average.)

As benchmarks we took C programs from the repository of the \svcomp 2016~\cite{Beyer16} and let \ultimate{} \automizer analyze if the error location is reachable.
In this repository the folders \texttt{systemc} and \texttt{eca-rers2012} contain programs that use function calls (hence the \vpa contain calls and returns) and in whose analysis the automata sizes are a bottleneck for \ultimate{} \automizer. 
We randomly picked 100 files from the \texttt{eca-rers2012} folder and took all 65 files from the \texttt{systemc} folder.
The timeout of \ultimate{} \automizer was set to $300$~s and the available memory was restricted to $4$~GiB.

\begin{table}[t]
	\caption{%
		Performance of \ultimate{} \automizer with and without minimization.
		Column \# shows the number of successful reachability analyses (out of $165$), average run time is given in milliseconds, average removal shows the states removed for all iterations, and the last column shows the relative number of iterations where minimization was employed.
		The first two rows contain the data for those programs where both modes succeeded, and the third row contains the data for those programs where only the minimization mode succeeded.
	}
	\label{tab:benchmarks_automizer_online}
	\vspace*{2mm}
	\centering
	%
	\begin{tabular}{|c|c|c|r|c|c|c|}%
	\hline
	Mode & Set & \colhead{\,\#\,} & \begin{minipage}{10mm}{\vspace*{.5mm}\hspace*{-1.5mm}$\varnothing$ time \\ \hspace*{0.5mm}total\vspace*{1mm}}\end{minipage} & \begin{minipage}{19mm}{\vspace*{.5mm}\centering$\varnothing$ time \\ minimization\vspace*{1mm}}\end{minipage} & $\varnothing$ removal & \begin{minipage}{26mm}{\vspace*{.5mm}\centering \% iterations \\with minimization\vspace*{1mm}}\end{minipage}%
	\\ \hline
	No minimization & \multirow{2}{*}{both} & \multirow{2}{*}{66} & \hspace*{.1mm} 16085 \hspace*{.1mm} & -- & -- & -- \\
	Minimization & & & \hspace*{.1mm} 15564 \hspace*{.1mm} & \, 2649 & 3077 & 75 \\
	\hline
	Minimization & exclusive & 12 & \hspace*{.1mm} 101985 \hspace*{.1mm} & 61384 & 8472 & 76 \\
	\hline
	\end{tabular}
	%
\end{table}

The results are given in Table~\ref{tab:benchmarks_automizer_online}.
Our minimization increases the number of programs that are successfully analyzed from $66$ to $78$.
On programs that are successfully analyzed in both modes, the mode using minimization is slightly faster. 
Hence, the additional cost due to minimization is more than compensated by savings in other operations on the (now smaller) \vpa on average.

\mysubsubsection{Evaluation on automata from \ultimate{} \automizer{}}
To evaluate the performance of our minimization algorithm in more details, we applied it to a benchmark set that consists of $1026$ \vpa produced by \ultimate{} \automizer.
All automata from this set contain call and return transitions and do not contain any dead ends (states from which no accepting state is reachable).
Details on the construction of these automata can be found in \iftacasversion{the extended version~\cite{extendedVersionArxiv}\else{Appendix~\ref{app:automata}}\fi\xspace.

\begin{figure}[t!]
	\hspace*{3mm}
	\begin{minipage}{0.4\textwidth}
		\tikzsetnextfilename{plot_offline_states}
		\scatterPlot[legend entries={100\%,50\%,data D,data N},legend to name=scatter_legend,]{\# states (input)}{output}{AutomizerOfflineDet}{STATES_INPUT}{STATES_OUTPUT}{35000}{{0,10000,20000,30000}}{AutomizerOfflineNondet}
	\end{minipage}
	\hspace*{15.6mm}
	\begin{minipage}{0.4\textwidth}
		\tikzsetnextfilename{plot_offline_internal_transitions}
		\scatterPlot{\# internal transitions (input)}{output}{AutomizerOfflineDet}{TRANSITIONS_INTERNAL_INPUT}{TRANSITIONS_INTERNAL_OUTPUT}{36000}{{0,10000,20000,30000}}{AutomizerOfflineNondet}
	\end{minipage}
	\hspace*{3mm}
	
	\vspace*{-12mm}
	\begin{center}
		\hspace*{-5mm} \iftacasversion{\tikzexternaldisable \ref{scatter_legend} \tikzexternalenable}\else{%
	\ifextendedversion{%
		\includegraphics{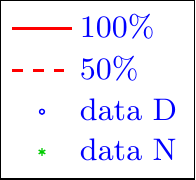}
	}\else{%
		\input{tikz/main-figure_crossref0}
	}\fi
}\fi
	\end{center}
	\vspace*{-10mm}
	
	\hspace*{3mm}
	\begin{minipage}{0.4\textwidth}
		\iftacasversion{%
		\tikzsetnextfilename{plot_offline_call_transitions}
		\scatterPlot[scaled ticks=base 10:-3]{\# call transitions (input)}{output}{AutomizerOfflineDet}{TRANSITIONS_CALL_INPUT}{TRANSITIONS_CALL_OUTPUT}{5000}{{0,1000,...,5000}}{AutomizerOfflineNondet}
		}\else{%
	\ifextendedversion{%
		\includegraphics{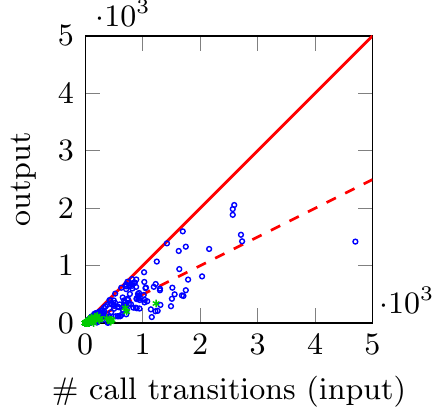}
	}\else{%
		\input{tikz/plot_offline_call_transitions}
	}\fi

		}\fi
	\end{minipage}
	\hspace*{15.6mm}
	\begin{minipage}{0.4\textwidth}
		\iftacasversion{%
		\tikzsetnextfilename{plot_offline_return_transitions}
		\scatterPlot[scaled ticks=base 10:-3]{\# return transitions (input)}{output}{AutomizerOfflineDet}{TRANSITIONS_RETURN_INPUT}{TRANSITIONS_RETURN_OUTPUT}{6200}{{0,1000,...,6000}}{AutomizerOfflineNondet}
		}\else{%
	\ifextendedversion{%
		\includegraphics{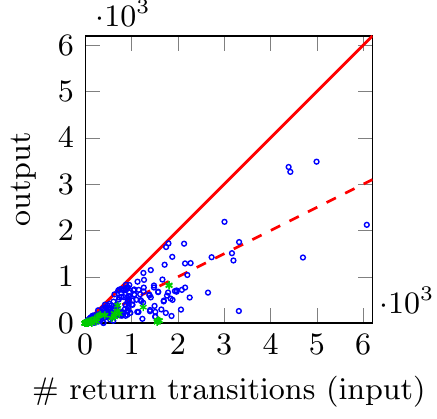}
	}\else{%
		\input{tikz/plot_offline_return_transitions}
	}\fi

		}\fi
	\end{minipage}
	\hspace*{3mm}
	
	\vspace*{2mm}
	
	\hspace*{1mm}
	\begin{minipage}{0.4\textwidth}
		\tikzsetnextfilename{plot_offline_states_vs_rel_reduction}
		\iftacasversion{%
		\dotPlotWithMeanHardCodedStates%
		[domain=1:35000,xmin=1,xmax=35000,xtick={0,10000,20000,30000},ymin=0,ymax=100,ytick={0,20,...,100},legend entries={mean D,mean N,data D,data N},legend to name=dot_legend,]{\# states (input)}{relative reduction}{AutomizerOfflineDet}{STATES_INPUT}{STATES_REDUCTION_RELATIVE}{\addlegendimage{line legend,blue,thick},\addlegendimage{line legend,darkgreen,thick}}{AutomizerOfflineNondet}%
		}\else{%
		%
	\ifextendedversion{%
		\includegraphics{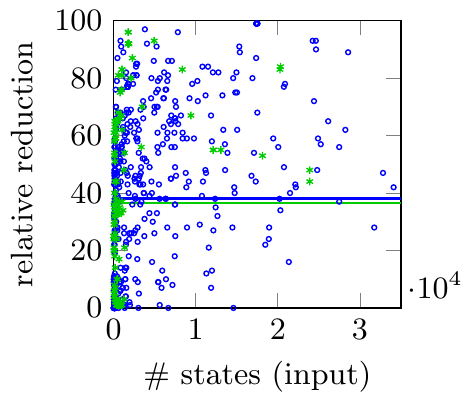}
	}\else{%
		\input{tikz/plot_offline_states_vs_rel_reduction}
	}\fi

		}\fi
	\end{minipage}
	\hspace*{-5.4mm} \iftacasversion{\tikzexternaldisable \ref{dot_legend} \tikzexternalenable}\else{%
	\ifextendedversion{%
		\includegraphics{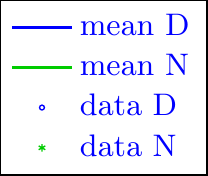}
	}\else{%
		\input{tikz/main-figure_crossref1}
	}\fi
}\fi 
	\begin{minipage}{0.4\textwidth}
		\hspace*{-1.135mm}
		\tikzsetnextfilename{plot_offline_transitions_vs_rel_reduction}
		\iftacasversion{%
		\dotPlotWithMeanHardCodedTransitions%
		[domain=1:37000,xmin=1,xmax=37000,xtick={0,10000,20000,30000},ymin=0,ymax=100,ytick={0,20,...,100}]{\# transitions (input)}{relative reduction}{AutomizerOfflineDet}{BUCHI_TRANSITIONS}{TRANSITIONS_REDUCTION_RELATIVE}{}{AutomizerOfflineNondet}%
		}\else{%
		%
	\ifextendedversion{%
		\includegraphics{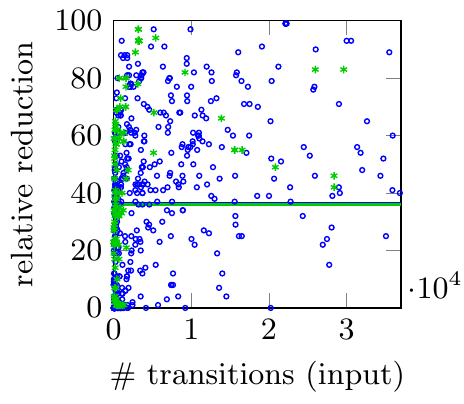}
	}\else{%
		\input{tikz/plot_offline_transitions_vs_rel_reduction}
	}\fi

		}\fi
	\end{minipage}
	\hspace*{2mm}
	
	\caption{%
		Minimization results on automata produced by \ultimate{} \automizer (see also Table~\ref{tab:benchmarks_automizer_offline}).
		D (N) stands for (non-)deterministic automata.
	}
	\label{fig:benchmarks_automizer_offline}
\end{figure}

\begin{table}[t]
	\caption{%
		Performance of our algorithm on automata produced by \ultimate{} \automizer (see also Figure~\ref{fig:benchmarks_automizer_offline}).
		We aggregate the data for all automata whose number of states is in a certain interval.
		Column \# shows the number of automata, \#nd shows the number of nondeterministic automata, and the other data is reported as average.
		The next seven columns show information about the input automata.
		The run time is given in milliseconds.
		The last two columns show the number of variables and clauses passed to the \pmaxsat solver.
	}
	\label{tab:benchmarks_automizer_offline}
	\vspace*{2mm}
	\centering
	\plottableOffline%
	{AutomizerOfflineAggregated}%
	{|c||r|r||r|r|r|r|r|r|r||r|r|r|}%
	{$|Q|$ (interval) & \colhead{\#} & \colheadDouble{\#nd} & \colhead{$|Q|$} & \colhead{$|\AlphaI|$} & \colhead{$|\AlphaC|$} & \colhead{$|\AlphaR|$} & \colhead{$|\transI|$} & \colhead{$|\transC|$} & \colheadDouble{$|\transR|$} & \colhead{time} & \colhead{$|\text{Var}|$} & \colhead{$|\text{Cls}|$}}%
	{\KEYaggregation & \KEYcount & \KEYnondeterminism & \KEYstatesIn & \KEYsymbolsInternal & \KEYsymbolsCall & \KEYsymbolsReturn & \KEYtransitionsInternalIn & \KEYtransitionsCallIn & \KEYtransitionsReturnIn & \KEYruntimeTotal & \KEYvariables & \KEYclauses}%
	{}
\end{table}

\smallskip

We ran our implementation on these automata using a timeout of $300$~s and a memory limit of $4$~GiB. Within the resource bounds we were able to minimize $596$ of the automata.
Details about these automata and the minimization run are presented in Table~\ref{tab:benchmarks_automizer_offline}.
In the table we grouped automata according to their size.
For instance, the first row aggregates the data of all automata that have up to 250 states.
The table shows that we were able to minimize automata up to a five-digit number of states and that automata that have a few thousand states can be minimized within seconds.
Figure~\ref{fig:benchmarks_automizer_offline} shows the sizes of the minimization results.
The first four graphs compare the sizes of input and output in terms of states and transitions.
The fourth graph shows that the (partly) significant size reduction is not only due to ``intraprocedural'' merges, but that also the number of return transitions is reduced.
The last two graphs show that the relative size reduction is higher on larger automata.
The reason is that small automata in \ultimate{} \automizer tend to have similarities to the control flow graph of a program, which is usually already minimal.

\mysubsubsection{Evaluation on random automata}

\begin{figure}[t!]
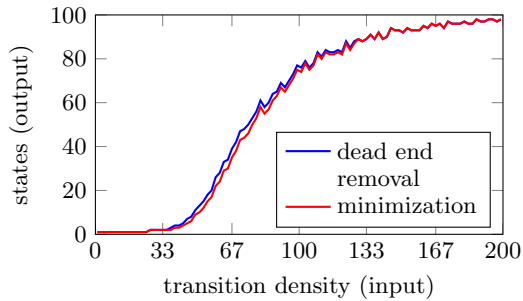

	\begin{center}
		\tikzsetnextfilename{plot_random_all}
		\doublelineplot[width=70mm,domain=0:200,xmin=0,xmax=200,xtick={0,33,67,100,133,167,200},ymin=0,ymax=100,ytick={0,20,40,60,80,100},legend entries={dead end,minimization},legend pos=south east]{transition density (input)}{states (output)}{SIZE_AFTER_PRE_PROC}{SIZE_OUTPUT}{\addlegendentry[text depth={}]{dead end}\addlegendimage{empty legend}\addlegendentry[text depth={}]{removal}}{\addlegendentry{minimization}}{Random}
	\end{center}
	%
	%
	\caption{%
		Minimization results on random \vpa with $100$~states, of which $50\%$ are accepting, and with one internal, call, and return symbol each. Return transitions are each inserted with $50$~random stack symbols.
		The transition density is increased in steps of $2\%$.
		Each data point stems from $500$~random automata.
	}
	\label{fig:benchmarks_random}
\end{figure}

The automata produced by \ultimate{} \automizer have relatively large alphabets 
(according to Table~\ref{tab:benchmarks_automizer_offline} there are on average less than 10 states per symbol)
and are extremely sparse (on average less than 1.5 transitions per state).
To investigate the applicability of our approach to \vpa without such structure, we also evaluate it on random nondeterministic \vpa.
We use a generalization of the random Büchi automata model by Tabakov and Vardi~\cite{TabakovV05} to \vpa (see \iftacasversion{the extended version~\cite{extendedVersionArxiv} for details\else{Appendix~\ref{app:random}}\fi\xspace).
Figure~\ref{fig:benchmarks_random} shows that our algorithm can remove some states on top of removing dead ends for lower transition densities, but overall it seems more appropriate to automata that have some structure.

\section{Related work}\label{sec:related_work}

Alur \etal~\cite{AlurEtal:2005} show that a canonical minimal \vpa does not exist in general.
They propose the \emph{single entry-\vpa} (\sevpa) model, a special \vpa of equivalent expressiveness with the following constraints:
Each state and call symbol is assigned to one of $k$ modules, and each module has a unique entry state which is the target of all respective call transitions.
This is enough structure to obtain the unique minimal $k$-\sevpa from any given $k$-\sevpa by quotienting.

Kumar \etal~\cite{KumarEtal:2006} extend the idea to \emph{modular \vpa}.
Here the requirement of having a unique entry per module is overcome, but more structure must be fixed to preserve a unique minimum -- most notably the restriction to weakly-hierarchical \vpa and the return alphabet being a singleton.

Chervet and Walukiewicz~\cite{ChervetWalukiewicz:2007} generalize the above classes to \emph{call driven automata}.
They show that general \vpa can be exponentially more succinct than the three classes presented.
Additionally, they propose another class called \emph{block \vpa} for which a unique minimum exists that is at most quadratic in the size of some minimal (general) \vpa.
However, to find it, the ``right'' partition into modules must be chosen, for which no efficient algorithm is known.

All above \vpa classes have in common that the languages recognized are subsets of \wellmatched, the states are partitioned into modules, and the minimal automaton (respecting the partition) can be found by quotienting.
While the latter is an enjoyable property from the algorithmic view, the constraints limit practical applicability:
Even under the assumption that the input \vpa recognizes a well-matched language, if it does not meet the constraints, it must first be converted to the respective form.
This conversion generally introduces an exponential blow-up in the number of states.
In contrast, our procedure assumes only weakly-hierarchical \vpa accepting matched-return words.
In general, a weakly-hierarchical \vpa can be obtained with only a linear blow-up.
(In \ultimate{} \automizer the automata already have this form.)

\begin{wrapfigure}[7]{r}{5cm}
	\hspace*{3mm}
	\ifextendedversion{%
		\includegraphics{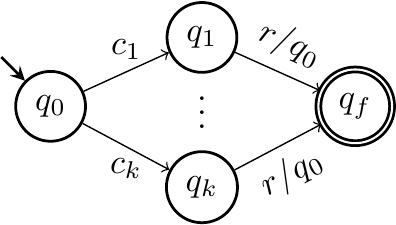}
	}\else{%
		\tikzsetnextfilename{auto_sevpa}
\begin{tikzpicture}
	\node[state] (q0) {$q_0$};
	\node[state] (q1) [right=8mm of q0, yshift=7mm] {$q_1$};
	\node (dots) [below=-1mm of q1] {$\vdots$};
	\node[state] (qk) [below=1mm of dots] {$q_k$};
	\node[state,final] (qf) [right=8mm of q1, yshift=-7mm] {$q_f$};
	\initialtransition[5mm]{q0}
	\draw[->] (q0) to node[above] {$c_1$} (q1);
	\draw[->] (q0) to node[below] {$c_k$} (qk);
	\draw[->,sloped] (q1) to node[above] {$r/q_0$} (qf);
	\draw[->,sloped] (qk) to node[below] {$r/q_0$} (qf);
\end{tikzpicture}
	}\fi

	\caption{A parametric $k$-\sevpa.}
	\label{fig:sevpa}
\end{wrapfigure}

Consider the $k$-\sevpa in Figure~\ref{fig:sevpa}.
It has $k$ modules $\{ q_1 \}$, $\dots$, $\{ q_k \}$ (and the default module $\{ q_0, q_f \}$).
This is the minimal k-\sevpa recognizing the language with the given modules.
Our algorithm will (always) merge all singleton modules into one state, resulting in a (minimal) three-state \vpa.

\smallskip

Caralp \etal~\cite{CaralpEtal:2013} present a polynomial trimming procedure for \vpa.
The task is to ensure that every configuration exhibited in the \vpa is both reachable and co-reachable.
Such a procedure may add new states.
We follow the opposite direction and exploit untrimmed configurations to reduce the number of states.

\smallskip

Ehlers~\cite{Ehlers:2010} provides a \sat encoding of the question ``does there exist an equivalent Büchi automaton (\ba) of size $n-1$''.
%
Baarir and Duret-Lutz~\cite{BaarirDuretlutz:2014,BaarirD15} extend the idea to so-called \emph{transition-based generalized} \ba.
Since the search is global, on the one hand, such a query can be used iteratively to obtain a reduced \ba after each step and some globally minimal \ba upon termination; on the other hand, global search leaves little structure to the solver.

Geldenhuys \etal~\cite{GeldenhuysMZ09} also use a \sat encoding to reduce the state-space of nondeterministic \fa.
The first step is to construct the minimal deterministic \fa \B.
Then the solver symbolically guesses a candidate \fa of a fixed size and checks that the automaton resulting from the subset construction applied to the candidate is isomorphic to \B.
If the formula is unsatisfiable, the candidate size must be increased.
Determinization may incur an exponential blow-up, and the resulting automaton is not always (but often) minimal.

In contrast to the above works, our \pmaxsat encoding consists of constraints about a quotienting relation (which always exists) that is polynomial in the size of the \vpa.
We do not find a minimal \vpa, but our technique can be applied to \vpa of practical relevance (the authors report results for automata with less than $20$ states), in particular using our greedy algorithm.

\smallskip

Restricted to \fa, the definition of a \Qrel coincides with direct bisimulation~\cite{EtessamiEtal:2005,DillEtal:1991}.
This has two consequences.
First, for \fa, we can omit the transitivity clauses because a direct bisimulation is always transitive.
Second, our algorithm always produces the (unique) maximal direct bisimulation.
This can be seen as follows.
If two states $p$ and $q$ bisimulate each other, then \X{p, q} can be assigned \true: since we are looking for a maximal assignment, we will assign this value.
If $p$ and $q$ do not bisimulate each other, then in any satisfying assignment \X{p, q} must be \false.
Alternatively, one can also say that our algorithm searches for \emph{some} maximal fixpoint, which is unique for direct bisimulation.

For \fa, it is well-known that minimization based on direct simulation yields smaller automata compared to direct bisimulation (i.e., the induced equivalence relation is coarser)~\cite{EtessamiEtal:2005}.
Two states can be merged if they simulate each other.
Our \pmaxsat encoding can be generalized to direct simulation by making the variables non-symmetric, i.e., using both $\Xa{p,q}$ and $\Xa{q,p}$ and adapting the clauses in a straightforward way.
This increases the complexity by a polynomial.

\newpage

\iftacasversion{%
\bibliographystyle{abbrv}
}\else{%
\bibliographystyle{abbrv_url}
}\fi
\bibliography{bibl}

\begin{thebibliography}{10}

\bibitem{AbdullaEtal:2009}
P.~A. Abdulla, Y.~Chen, L.~{Hol\'{\i}k}, and T.~Vojnar.
\newblock Mediating for reduction (on minimizing alternating {B\"{u}chi}
  automata).
\newblock In {\em {FSTTCS}}, volume~4 of {\em LIPIcs}, pages 1--12. Schloss
  Dagstuhl - Leibniz-Zentrum fuer Informatik, 2009.
\newblock URL: \url{http://dx.doi.org/10.4230/LIPIcs.FSTTCS.2009.2302}.

\bibitem{AbelR15}
A.~Abel and J.~Reineke.
\newblock {MeMin}: {SAT}-based exact minimization of incompletely specified
  mealy machines.
\newblock In {\em {ICCAD}}, pages 94--101. {IEEE}, 2015.
\newblock URL: \url{http://dx.doi.org/10.1109/ICCAD.2015.7372555}.

\bibitem{AH15d}
A.~Abram{\'e} and D.~Habet.
\newblock {AHMAXSAT} : Description and evaluation of a branch and bound
  {Max-SAT} solver.
\newblock {\em Journal on Satisfiability, Boolean Modeling and Computation},
  2015.

\bibitem{AlmeidaEtal:2016}
R.~Almeida, L.~{Hol\'{\i}k}, and R.~Mayr.
\newblock Reduction of nondeterministic tree automata.
\newblock In {\em {TACAS}}, volume 9636 of {\em LNCS}, pages 717--735.
  Springer, 2016.
\newblock URL: \url{http://dx.doi.org/10.1007/978-3-662-49674-9_46}.

\bibitem{AlurBE17}
R.~Alur, A.~Bouajjani, and J.~Esparza.
\newblock Model checking procedural programs.
\newblock In {\em Handbook of Model Checking}. Springer, 2017. To Appear.

\bibitem{AlurEtal:2005}
R.~Alur, V.~Kumar, P.~Madhusudan, and M.~Viswanathan.
\newblock Congruences for visibly pushdown languages.
\newblock In {\em {ICALP}}, volume 3580 of {\em LNCS}, pages 1102--1114.
  Springer, 2005.
\newblock URL: \url{http://dx.doi.org/10.1007/11523468_89}.

\bibitem{AlurMadhusudan:2004}
R.~Alur and P.~Madhusudan.
\newblock Visibly pushdown languages.
\newblock In {\em {STOC}}, pages 202--211. {ACM}, 2004.
\newblock URL: \url{http://doi.acm.org/10.1145/1007352.1007390}.

\bibitem{AlurMadhusudan:2009}
R.~Alur and P.~Madhusudan.
\newblock Adding nesting structure to words.
\newblock {\em J. {ACM}}, 56(3), 2009.
\newblock URL: \url{http://doi.acm.org/10.1145/1516512.1516518}.

\bibitem{BaarirDuretlutz:2014}
S.~Baarir and A.~Duret{-}Lutz.
\newblock Mechanizing the minimization of deterministic generalized {B\"{u}chi}
  automata.
\newblock In {\em {FORTE}}, volume 8461 of {\em LNCS}, pages 266--283.
  Springer, 2014.
\newblock URL: \url{http://dx.doi.org/10.1007/978-3-662-43613-4_17}.

\bibitem{BaarirD15}
S.~Baarir and A.~Duret{-}Lutz.
\newblock {SAT}-based minimization of deterministic $\omega$-automata.
\newblock In {\em {LPAR}}, volume 9450 of {\em LNCS}, pages 79--87. Springer,
  2015.
\newblock URL: \url{http://dx.doi.org/10.1007/978-3-662-48899-7_6}.

\bibitem{Beyer16}
D.~Beyer.
\newblock Reliable and reproducible competition results with benchexec and
  witnesses (report on {SV-COMP} 2016).
\newblock In {\em {TACAS}}, volume 9636 of {\em LNCS}, pages 887--904.
  Springer, 2016.
\newblock URL: \url{http://dx.doi.org/10.1007/978-3-662-49674-9_55}.

\bibitem{CaralpEtal:2013}
M.~Caralp, P.~Reynier, and J.~Talbot.
\newblock Trimming visibly pushdown automata.
\newblock In {\em {CIAA}}, volume 7982 of {\em LNCS}, pages 84--96. Springer,
  2013.
\newblock URL: \url{http://dx.doi.org/10.1007/978-3-642-39274-0_9}.

\bibitem{ChaIKM97}
B.~Cha, K.~Iwama, Y.~Kambayashi, and S.~Miyazaki.
\newblock Local search algorithms for partial {MAXSAT}.
\newblock In {\em {AAAI/IAAI}}, pages 263--268. {AAAI} Press / The {MIT} Press,
  1997.
\newblock URL: \url{http://www.aaai.org/Library/AAAI/1997/aaai97-041.php}.

\bibitem{ChervetWalukiewicz:2007}
P.~Chervet and I.~Walukiewicz.
\newblock Minimizing variants of visibly pushdown automata.
\newblock In {\em {MFCS}}, volume 4708 of {\em LNCS}, pages 135--146. Springer,
  2007.
\newblock URL: \url{http://dx.doi.org/10.1007/978-3-540-74456-6_14}.

\bibitem{Clemente11}
L.~Clemente.
\newblock {B\"{u}chi} automata can have smaller quotients.
\newblock In {\em {ICALP} {(2)}}, volume 6756 of {\em LNCS}, pages 258--270.
  Springer, 2011.
\newblock URL: \url{http://dx.doi.org/10.1007/978-3-642-22012-8_20}.

\bibitem{DAntoniV14}
L.~D'Antoni and M.~Veanes.
\newblock Minimization of symbolic automata.
\newblock In {\em {POPL}}, pages 541--554. {ACM}, 2014.
\newblock URL: \url{http://doi.acm.org/10.1145/2535838.2535849}.

\bibitem{DillEtal:1991}
D.~L. Dill, A.~J. Hu, and H.~Wong{-}Toi.
\newblock Checking for language inclusion using simulation preorders.
\newblock In {\em {CAV}}, volume 575 of {\em LNCS}, pages 255--265. Springer,
  1991.
\newblock URL: \url{http://dx.doi.org/10.1007/3-540-55179-4_25}.

\bibitem{Ehlers:2010}
R.~Ehlers.
\newblock Minimising deterministic {B\"{u}chi} automata precisely using {SAT}
  solving.
\newblock In {\em {SAT}}, volume 6175 of {\em LNCS}, pages 326--332. Springer,
  2010.
\newblock URL: \url{http://dx.doi.org/10.1007/978-3-642-14186-7_28}.

\bibitem{EtessamiEtal:2005}
K.~Etessami, T.~Wilke, and R.~A. Schuller.
\newblock Fair simulation relations, parity games, and state space reduction
  for {B\"{u}chi} automata.
\newblock {\em {SIAM} J. Comput.}, 34(5):1159--1175, 2005.
\newblock URL: \url{http://dx.doi.org/10.1137/S0097539703420675}.

\bibitem{FuMalik:2006}
Z.~Fu and S.~Malik.
\newblock On solving the partial {MAX-SAT} problem.
\newblock In {\em {SAT}}, volume 4121 of {\em LNCS}, pages 252--265. Springer,
  2006.
\newblock URL: \url{http://dx.doi.org/10.1007/11814948_25}.

\bibitem{GeldenhuysMZ09}
J.~Geldenhuys, B.~van~der Merwe, and L.~van Zijl.
\newblock Reducing nondeterministic finite automata with {SAT} solvers.
\newblock In {\em {FSMNLP}}, volume 6062 of {\em LNCS}, pages 81--92. Springer,
  2009.
\newblock URL: \url{http://dx.doi.org/10.1007/978-3-642-14684-8_9}.

\bibitem{HabermehlHRSV12}
P.~Habermehl, L.~Hol{\'{\i}}k, A.~Rogalewicz, J.~Sim{\'{a}}cek, and T.~Vojnar.
\newblock Forest automata for verification of heap manipulation.
\newblock {\em Formal Methods in System Design}, 41(1):83--106, 2012.

\bibitem{HarrisJR12}
W.~R. Harris, S.~Jha, and T.~W. Reps.
\newblock Secure programming via visibly pushdown safety games.
\newblock In {\em {CAV}}, volume 7358 of {\em LNCS}, pages 581--598. Springer,
  2012.

\bibitem{HeizmannDGLMSP16}
M.~Heizmann, D.~Dietsch, M.~Greitschus, J.~Leike, B.~Musa, C.~Sch{\"{a}}tzle,
  and A.~Podelski.
\newblock Ultimate automizer with two-track proofs - (competition
  contribution).
\newblock In {\em {TACAS}}, volume 9636 of {\em LNCS}, pages 950--953.
  Springer, 2016.
\newblock URL: \url{http://dx.doi.org/10.1007/978-3-662-49674-9_68}.

\bibitem{HeizmannHP10}
M.~Heizmann, J.~Hoenicke, and A.~Podelski.
\newblock Nested interpolants.
\newblock In {\em {POPL}}, pages 471--482. {ACM}, 2010.

\bibitem{HeizmannHP13}
M.~Heizmann, J.~Hoenicke, and A.~Podelski.
\newblock Software model checking for people who love automata.
\newblock In {\em {CAV}}, volume 8044 of {\em LNCS}, pages 36--52. Springer,
  2013.
\newblock URL: \url{http://dx.doi.org/10.1007/978-3-642-39799-8_2}.

\bibitem{TACAS2017}
M.~Heizmann, C.~Schilling, and D.~Tischner.
\newblock Minimization of visibly pushdown automata using partial {M}ax-{SAT}.
\newblock In {\em {TACAS} {(1)}}, volume 10205 of {\em LNCS}, pages 461--478.
  Springer, 2017.
\newblock URL: \url{http://dx.doi.org/10.1007/978-3-662-54577-5_27}.

\bibitem{HolzmannP99}
G.~J. Holzmann and A.~Puri.
\newblock A minimized automaton representation of reachable states.
\newblock {\em {STTT}}, 2(3):270--278, 1999.

\bibitem{Hopcroft:1971}
J.~E. Hopcroft.
\newblock An n log n algorithm for minimizing states in a finite automaton.
\newblock In {\em Theory of Machines and Computations}, pages 189--196.
  Academic Press, 1971.
\newblock URL:
  \url{http://www.sciencedirect.com/science/article/pii/B9780124177505500221}.

\bibitem{KlarlundMS02}
N.~Klarlund, A.~M{\o}ller, and M.~I. Schwartzbach.
\newblock {MONA} implementation secrets.
\newblock {\em Int. J. Found. Comput. Sci.}, 13(4):571--586, 2002.

\bibitem{KroeningS08}
D.~Kroening and O.~Strichman.
\newblock {\em Decision Procedures - An Algorithmic Point of View}.
\newblock Texts in Theoretical Computer Science. An {EATCS} Series. Springer,
  2008.
\newblock URL: \url{http://dx.doi.org/10.1007/978-3-540-74105-3}.

\bibitem{KumarEtal:2006}
V.~Kumar, P.~Madhusudan, and M.~Viswanathan.
\newblock Minimization, learning, and conformance testing of {Boolean}
  programs.
\newblock In {\em {CONCUR}}, volume 4137 of {\em LNCS}, pages 203--217.
  Springer, 2006.
\newblock URL: \url{http://dx.doi.org/10.1007/11817949_14}.

\bibitem{KumarMV07}
V.~Kumar, P.~Madhusudan, and M.~Viswanathan.
\newblock Visibly pushdown automata for streaming {XML}.
\newblock In {\em {WWW}}, pages 1053--1062. {ACM}, 2007.
\newblock URL: \url{http://doi.acm.org/10.1145/1242572.1242714}.

\bibitem{MayrClemente:2013}
R.~Mayr and L.~Clemente.
\newblock Advanced automata minimization.
\newblock In {\em {POPL}}, pages 63--74. {ACM}, 2013.
\newblock URL: \url{http://doi.acm.org/10.1145/2429069.2429079}.

\bibitem{MozafariZZ12}
B.~Mozafari, K.~Zeng, and C.~Zaniolo.
\newblock High-performance complex event processing over {XML} streams.
\newblock In {\em {SIGMOD} Conference}, pages 253--264. {ACM}, 2012.
\newblock URL: \url{http://doi.acm.org/10.1145/2213836.2213866}.

\bibitem{Pitcher05}
C.~Pitcher.
\newblock Visibly pushdown expression effects for {XML} stream processing.
\newblock {\em Programming Language Technologies for {XML}}, 1060:1--14, 2005.

\bibitem{Srba:2009}
J.~Srba.
\newblock Beyond language equivalence on visibly pushdown automata.
\newblock {\em Logical Methods in Computer Science}, 5(1), 2009.
\newblock URL: \url{http://arxiv.org/abs/0901.2068}.

\bibitem{TabakovV05}
D.~Tabakov and M.~Y. Vardi.
\newblock Experimental evaluation of classical automata constructions.
\newblock In {\em {LPAR}}, volume 3835 of {\em LNCS}, pages 396--411. Springer,
  2005.
\newblock URL: \url{http://dx.doi.org/10.1007/11591191_28}.

\bibitem{ThakurLLBDEAR10}
A.~V. Thakur, J.~Lim, A.~Lal, A.~Burton, E.~Driscoll, M.~Elder, T.~Andersen,
  and T.~W. Reps.
\newblock Directed proof generation for machine code.
\newblock In {\em {CAV}}, volume 6174 of {\em LNCS}, pages 288--305. Springer,
  2010.
\newblock URL: \url{http://dx.doi.org/10.1007/978-3-642-14295-6_27}.

\bibitem{ThomoVY08}
A.~Thomo, S.~Venkatesh, and Y.~Y. Ye.
\newblock Visibly pushdown transducers for approximate validation of streaming
  {XML}.
\newblock In {\em FoIKS}, volume 4932 of {\em LNCS}, pages 219--238. Springer,
  2008.
\newblock URL: \url{http://dx.doi.org/10.1007/978-3-540-77684-0_16}.

\end{thebibliography}

\iftacasversion\else{%
\newpage
\appendix

\section{Implementation with an external PMax-SAT solver}\label{app:external_solver}
The most direct implementation of our technique consists of two phases:
1)~Constructing all clauses and 2)~solving the \pmaxsat instance.
We also implemented this approach where we use the established external \pmaxsat solver \texttt{ahmaxsat} in version~$1.68$~\cite{AH15d}.

We compared the implementation that uses our own greedy solver to the one that uses the external solver and finds a globally optimal solution.
For the comparison we created $100$~random \vpa (cf. Appendix~\ref{app:random}) with $50$~states, two internal, call, and return symbols, $50\%$ acceptance density, and $100\%$ transition density.
Return transitions were inserted for each stack symbol.

As we have pointed out in Section~\ref{sec:local_optimum}, an optimal solution to the \pmaxsat instance does not guarantee that the resulting automaton is smaller than for a suboptimal solution (it can even have the opposite effect).
We found that in all cases the resulting automata were identical for both solution methods.
This adds confidence that our greedy solver does not waste reduction potential.

\section{Implementation of an equality solver for checking transitivity incrementally}\label{app:transitivity_checker}
Combining a Boolean solver with a first-order theory solver results in a well-known satisfiability modulo theories (SMT) solver~\cite{KroeningS08}.
To reduce the number of clauses in our Boolean solver, we implemented a theory solver for the equality domain.
The purpose of the solver is 1)~to check that the current partial assignment is consistent and, if it is consistent, 2)~to generate all variables that must be set to \true to not make the current assignment inconsistent.

\smallskip

We first describe the high-level architecture.
Whenever the Boolean solver assigns a variable \X{p, q} the value \true (\false), it asserts equality $p = q$ (disequality $p \neq q$) to the equality solver.
The equality solver checks whether this assertion makes the current context inconsistent.
If the context becomes inconsistent, the equality solver reports a contradiction and the Boolean solver backtracks the assignment.
(Backtracking must also be synchronized with the equality solver.)
If the context stays consistent, the solver returns all additional equalities that follow from the context by transitivity.
The Boolean solver then additionally assigns the respective variables.

\smallskip

On the lower level, the equality solver maintains a union-find data structure (i.e., a forest) of states to keep track of all states that are currently equal.
Assigning to a variable \X{p, q} the value \true corresponds to a union operation of states $p$ and $q$ (i.e., \texttt{union($p$,$q$)}).
Assigning to a variable \X{p, q} the value \false would correspond to checking that states $p$ and $q$ are not connected (i.e., \texttt{find($p$)} $\neq$ \texttt{find($q$)}).
However, we neither need to care for assignments with the value \false, nor need we keep track of disequalities; the reason is that we always feed the Boolean solver with all transitive equality information, and hence the inconsistency detection takes place at the Boolean level.

To support backtracking, the union operation connects two trees only via a temporary node.
These nodes are stored in a list.
The equality solver additionally keeps a stack with a frame for each decision step of the Boolean solver.
For each decision step, the current list is pushed on the stack and a new list is created.
For each backtracking step, all temporary nodes in the current list are removed and the list is updated to the one on top of the stack (after popping).

\section{Benchmark automata generation}\label{app:automata}
The set of automata that we used in our evaluation was obtained as follows.
The software verifier \ultimate{} \automizer was run on C program benchmarks from the \svcomp 2016~\cite{Beyer16}.
All verification tasks that were solved in less than five iterations were ignored.
(Rationale: automata from early iterations are very similar to the control flow graph and hence not amenable to minimization.)
For the remaining verification tasks, the largest automaton that occurred was written to a file.
The available memory was restricted to $4$~GiB and the timeout was set to $60$~s.
Two different properties were checked: reachability of an error location and termination.
For the former property we used C files from the categories \textit{Arrays, BitVectors, IntegersControlFlow, and DeviceDriversLinux64}.
For the latter property we used C files from the same categories and additionally from the category \textit{Termination}.
From each folder we chose at most $50$ (randomly selected) benchmarks.

\section{Random automata generation à la Tabakov-Vardi}\label{app:random}
We generalize the random model for Büchi automata by Tabakov and Vardi~\cite{TabakovV05} to \vpa.
The original model takes as input the number of states $|Q|$, the number of symbols $|\AlphaI|$, the acceptance density $d_a$ (a number between~$0$ and~$1$), and the transition density $d_t$ (a non-negative number).
The density is the number $d$ such that $d = \frac{k}{|Q|}$, where $k$ is the actual number of accepting states (\resp transitions).

For \vpa we add three more parameters: the number of call symbols $|\AlphaC|$, the number of return symbols $|\AlphaR|$, and the stack symbol density $d_s$ (we consider the same transition density for internal, call, and return transitions).
The latter controls for how many stack symbols a return transition is inserted.
For instance, when inserting a return transition from state $p$ to state $q$ with return symbol $r$ and stack symbol density $d_s$, we insert $k$ return transitions $(q, r, \hier{q}, q')$ with random stack symbols \hier{q} such that $d_s = \frac{k}{|Q|}$.
(Recall that the set of states $Q$ is also the stack alphabet for weakly-hierarchical \vpa.)
}\fi

\end{document}